\newtheorem{proposition}{Proposition}
\theoremstyle{definition}
\newtheorem{remark}{Remark}
\title{Mutual Information Estimation via Score-to-Fisher Bridge
for Nonlinear Gaussian Noise Channels}
\author{
\IEEEauthorblockN{Tadashi Wadayama}
\IEEEauthorblockA{Nagoya Institute of Technology\\
\texttt{wadayama@nitech.ac.jp}}
}
\begin{document}
\maketitle

\begin{abstract}
We present a numerical method to evaluate mutual information (MI) 
in nonlinear Gaussian noise channels by using denoising score matching (DSM) 
learning for estimating the score function of channel output.
Via de Bruijn's identity, 
Fisher information estimated from the learned score function yields 
accurate estimates of MI through a Fisher integral representation  
for a variety of priors and channel nonlinearities.
In this work, we propose a comprehensive theoretical foundation 
for the Score-to-Fisher bridge methodology, 
along with practical guidelines for its implementation. 
We also conduct extensive validation experiments, 
comparing our approach with closed-form solutions 
and a kernel density estimation baseline. 
The results of our numerical experiments demonstrate 
that the proposed method is both practical and efficient 
for MI estimation in nonlinear Gaussian noise channels.
Additionally, we discuss the theoretical connections between our score-based framework 
and thermodynamic concepts, 
such as partition function estimation and optimal transport.
\end{abstract}

\section{Introduction}
\label{sec:intro}

We consider additive Gaussian noise channels with a deterministic nonlinear front-end,
\begin{align}
Y_t = f(X) + Z_t,\qquad 
Z_t\sim\mathcal N(\bm 0,t\bm I_n),
\end{align}
and study how to estimate the mutual information (MI) $I(X;Y_t)$ efficiently and accurately.
A nonlinear front-end function can be represented by a deterministic function 
$f: \mathbb{R}^n \rightarrow \mathbb{R}^n$.
Such models arise widely due to nonlinear front-end saturation/clipping, 
DAC/ADC nonlinearity,
photonic modulators, optical fibers, or learned encoders implemented with 
neural networks. 
The value of the mutual information $I(X;Y_t)$ is important 
for evaluating the ultimate performance of coded communication systems;
it provides useful information for designing the system such as 
coding, preprocessing, or decoding/detection algorithms.
In a sensing system, the nonlinear 
front-end can be interpreted as a details of sensing process.
The value of the mutual information $I(X;Y_t)$ 
indicates how much information is transferred from the sensing 
target to the output.

When $f$ is nonlinear, evaluating $I(X;Y_t)$ is, in general, 
challenging in practice because accurate
posterior means $\mathbb E[X| Y_t]$ and normalized likelihoods 
are inaccessible or computationally intractable.
Let $W = f(X)$ be a transformed input.
The de~Bruijn identity immediately yields
\begin{align}
  \label{eq:bruijn-identity}
\frac{d}{dt} I(W;Y_t) = \frac{1}{2}J(Y_t)-\frac{n}{2t}
\end{align}
so that the mutual information $I(W;Y_t)$ can be obtained 
by integrating the Fisher information $J(Y_t)$.

Our key observation is that 
the Fisher information $J(Y_t)$ can be estimated from 
the score function of $Y_t$ which can be 
estimated by denoising score matching (DSM) learning from samples of $Y_t$,
and that the deterministic nonlinearity can be 
absorbed into a transformed input
$W=f(X)$, so that
\begin{align}
  \label{eq:mutual-info-equivalence}
  I(X;Y_t)=I(W;Y_t), \quad Y_t = W + Z_t,
\end{align}
holds for \emph{any} deterministic function $f$.  
This equality is the basis of our score-based estimator of MI.

By integrating (\ref{eq:bruijn-identity}) over $t\in[T,\infty)$, 
for general $f$, we have the {\em Fisher integral representation} of MI:
\begin{align}
  \label{eq:fisher-integral-representation}
  I(X;Y_T)=\frac12\int_T^\infty\!\Big(\frac{n}{t}-J(Y_t)\Big)\,dt
\end{align}
that depends only on the Fisher information of $Y_t$.

In this paper,
we propose a {score-based estimator of MI}\footnote{A software implementation to reproduce the numerical results in this paper is available
at \url{https://github.com/wadayama/score-to-fisher-mi}.}
tailored to the channel represented by $Y_t=f(X)+Z_t$.
The method learns the score of channel output, 
or the marginal density $p_{Y_t}$,
using  DSM learning and then obtain 
an estimate of the Fisher information 
$
\widehat J(Y_t).
$
From this, we obtain an estimate of the mutual information via an 
Fisher integral representation (\ref{eq:fisher-integral-representation}).
Crucially, the approach \emph{never} requires access to 
the posterior $p_{X|Y}(\cdot|\cdot)$.
All we need is possibility of forward channel simulation,
i.e., we assume that generation 
of samples $\bm x\sim p_X(\cdot)$, 
evaluations of $\bm w=f(\bm x)$, and Gaussian noise addition
are computationally feasible.

The main contributions of this paper are summarized as follows:
(1){Score-only MI estimation for nonlinear front-ends:}
we formalize a pipeline: DSM learning of $p_{Y_t}$ $\Rightarrow$ Fisher estimation
  $\Rightarrow$ Fisher integral representation—that applies to \emph{arbitrary} deterministic $f$
  as long as $p_{Y_t}$ can be easily sampled.
(2){Practical recipe:}
  we detail implementation choices that make the estimator stable and reproducible:
  logarithmic grids in $t$, tail corrections for the MI integral.
(3){Validation and use-cases:}
  we verify the estimator on Gaussian inputs where closed forms are available, and report
  experiments on composite nonlinear channels.

In linear Gaussian channels, multivariate de~Bruijn identity provides 
differential and integral representations of information quantities.
For nonlinear $f$, however, \emph{posterior-free}
and \emph{normalization-free} MI estimators remain limited. 
Building on DSM learning, our approach attaches directly 
to the marginal density $p_{Y_t}$,
circumventing posterior computations and partition functions.
Moreover, comparing with the conventional 
kernel density estimation (KDE) based estimators, 
our approach is robust to the high-dimensionality of the input space.
The estimate of MI can be used for many applications such as 
design of coding systems, optimization of preprocessing, 
sensor design, and system optimization, 
and design of detection and decoding algorithms.

In modern score-based modeling, the score function model serves as 
a tractable surrogate for otherwise intractable densities.
Our score-to-fisher bridge leverages this view to turn mutual 
information into a one-dimensional 
integral of score-based functionals along Gaussian smoothing, 
enabling accurate numerical MI estimation without 
high-dimensional density estimation.

\section{Preliminaries}
\label{sec:prelim}

\subsection{Notation}

Unless otherwise stated, all logarithms are natural (nats). Bold letters denote vectors in $\mathbb{R}^n$.
For a random vector $U$ with density $p_U$, we write $h(U)$ for its differential entropy, $I(\cdot;\cdot)$ for mutual information, and $\operatorname{Cov}(U)$ for its covariance matrix. 
The (scalar) Fisher information is defined via the score function 
$s_U(\bm u)\equiv\nabla_{\bm u}\log p_U(\bm u)$ as
\begin{align}
J(U)\equiv \mathbb{E}_{\bm u\sim p_U(\cdot)}\!\left[\|\nabla_{\bm u}\log p_U(\bm u)\|^2\right],
\end{align}
see, e.g.,~\cite{fisher1922,cover2006}.

\subsection{Related works}
\subsubsection{Estimation of Mutual Information} 

For channels with memory, simulation-based methods have been developed
to evaluate information rates without closed-form expressions.
Arnold \textit{et al.}~\cite{arnold2006TIT} proposed a simulation-based framework
that computes upper and lower bounds on the information rate by combining
auxiliary finite-state channel models with forward--backward recursions
and importance sampling; the method is broadly applicable 
to intersymbol interference (ISI) and
finite-state channels where exact evaluation is infeasible.
Building on this line, Sadeghi \textit{et al.}~\cite{sadeghi2009TIT}
optimized these bounds over the choice of auxiliary models, yielding
systematically improved estimates and a practical tool for design and
benchmarking of channels with memory.

Another line of work estimates the MI directly from its definition via
simulation-based plug-in estimators: one approximates $p_Y$ (or $h(Y)$)
using KDE~\cite{moon1995} or employs $k$-nearest neighbors
($k$NN) estimators~\cite{kraskov2004}. In the additive Gaussian setting,
recent theory shows that plug-in estimation of $h(X+Z)$ is in fact
optimal~\cite{goldfeld2019}.

\subsubsection{Theoretical Foundations and Thermodynamic Connections} 
The integral representation used in this work is deeply 
rooted in the interplay between information theory 
and thermodynamics, particularly through the heat equation. 
This connection dates back to Stam \cite{stam1959}, who derived the 
de Bruijn identity to prove the entropy power inequality, 
explicitly treating noise addition as a diffusion process. 
This line of inquiry was further developed by Brown \cite{brown1982} 
and Barron \cite{barron1986entropy}, who established the entropic central limit theorem, 
proving that the Fisher information and entropy monotonically 
converge to those of a Gaussian distribution during diffusion. 
In a modern context, Guo, Shamai, and Verdú \cite{guo2005} 
unified these concepts through the I-MMSE relationship, 
revealing that the derivative of mutual information is equivalent 
to the minimum mean-square error (and thus Fisher information) 
in Gaussian channels.
Our proposed Score-to-Fisher bridge can be viewed as a 
data-driven operationalization of these classical theoretical 
frameworks: while Stam and Barron utilized these relations for 
proofs of inequalities, we leverage them for the exact numerical 
estimation of information quantities via learned score functions.

\subsubsection{Diffusion Process and Fisher Information}
Recently, score-based representations have found applications beyond generative modeling, 
extending into information-theoretic analysis. 
For instance, Premkumar~\cite{premkumar2025neural} introduced ``neural entropy'' to quantify the information stored in diffusion models, relating the total entropy production along a diffusion path to a time integral of a \emph{Fisher-divergence} functional (i.e., an expected squared difference of score functions).
While their framework is motivated by a non-equilibrium thermodynamic view of diffusion generative models (and the resulting quantity depends on the chosen forward process), it shares a fundamental mathematical connection with our approach: both convert global information measures into one-dimensional integrals of score-based functionals, through diffusion/heat-equation identities (de~Bruijn's identity in our additive Gaussian-noise setting).

\subsection{AWGN channel and score function}

Let $X\in\mathbb{R}^n$ be a channel input random vector 
with known prior probability density function (PDF) $p_X$, 
and consider the AWGN channel family:
\begin{align}
  \label{eq:awgn-channel}
Y_t = X + Z_t,\quad Z_t \sim \mathcal N(\bm 0, t \bm I_n),\quad t>0,
\end{align}
where $t$ represents the noise variance.
The matrix $\bm I_n$ is the $n\times n$ identity matrix.
Let $p_{Y_t}$ denote the marginal density of $Y_t$. The score function of $p_{Y_t}$ is defined by
\begin{align}
  \label{eq:score-function}
  s_{Y_t}(\bm y) \equiv \nabla_{\bm y}\log p_{Y_t}(\bm y).
\end{align}

Throughout, we assume that sampling from $p_X$ is computationally feasible
by forward sampling or Langevin sampling using the score $\nabla_{\bm x}\log p_X(\bm x)$
(e.g., Rossky \textit{et al.} \cite{rossky1978}, Welling and Teh  \cite{welling2011sgld}).

The Fisher information of $Y_t$ is given by
\begin{align}
J(Y_t) \equiv \mathbb{E}_{\bm y\sim p_{Y_t}(\cdot)}\bigl[\|s_{Y_t}(\bm y)\|^2\bigr],
\end{align}
where $\|\cdot\|$ denotes the Euclidean norm.
Under standard regularity conditions, 
the following {\em Robbins-Tweedie identity} 
\cite{robbins1956, miyasawa1961, efron2011} holds for $t>0$:
\begin{align}
\mathbb{E}[X|Y_t=\bm y]=\bm y + t s_{Y_t}(\bm y).
\end{align}


\subsection{I--MMSE relation}

For the SNR-parameterized AWGN model
\begin{align}
Y_{\gamma}=\sqrt{\gamma} X + N,\quad N\sim\mathcal N(\bm 0,\bm I_n),\quad N\perp X,
\end{align}
the I--MMSE relation \cite{guo2005, palomar2006} states that the identity 
\begin{align}
\frac{d}{d\gamma} I(X;Y_\gamma) = \frac{1}{2}\mathrm{mmse}(\gamma),
\end{align}
holds for any snr $\gamma > 0$ where $\mathrm{mmse}(\gamma)$ is defined by 
\begin{align}
\mathrm{mmse}(\gamma) \equiv \mathbb{E}\bigl[\|X-\mathbb{E}[X| Y_\gamma]\|^2\bigr].
\end{align}
With the change of variables $\gamma=1/t$ (and noting $I(X;Y_\gamma)=I(X;Y_t)$),
an equivalent form for our variance parameterization $Y_t=X+Z_t$ is given by 
\begin{align}
\frac{d}{dt}\,I(X;Y_t) =-\frac{1}{2t^2}\mathrm{mmse}(t),
\end{align}
where 
$
\mathrm{mmse}(t)\equiv \mathbb{E}\bigl[\|X-\mathbb{E}[X| Y_t]\|^2\bigr].
$

\subsection{de Bruijn identity}

Let $h(Y_t)$ denote the differential entropy of $Y_t$. 
Consider the AWGN channel setup defined by (\ref{eq:awgn-channel}).
Then the (multivariate) de~Bruijn identity gives
\begin{align}
\frac{d}{dt} h(Y_t) =\frac{1}{2} J(Y_t).
\end{align}
The de Bruijn identity was discussed in \cite{stam1959, barron1986entropy},
which is proved by using the heat partial differential equation.
Intuitively, diffusion acts as a smoothing process that destroys information.
As illustrated in Fig.~\ref{fig:diffusion_process}, this process is characterized 
by the decay of Fisher information $J(Y_t)$ (which measures the `sharpness' or structural complexity of the distribution)
and the simultaneous increase in entropy $h(Y_t)$.
Our method computes the total information content by integrating this Fisher information profile.

\begin{figure}[t]
  \centering
  \includegraphics[width=\linewidth]{./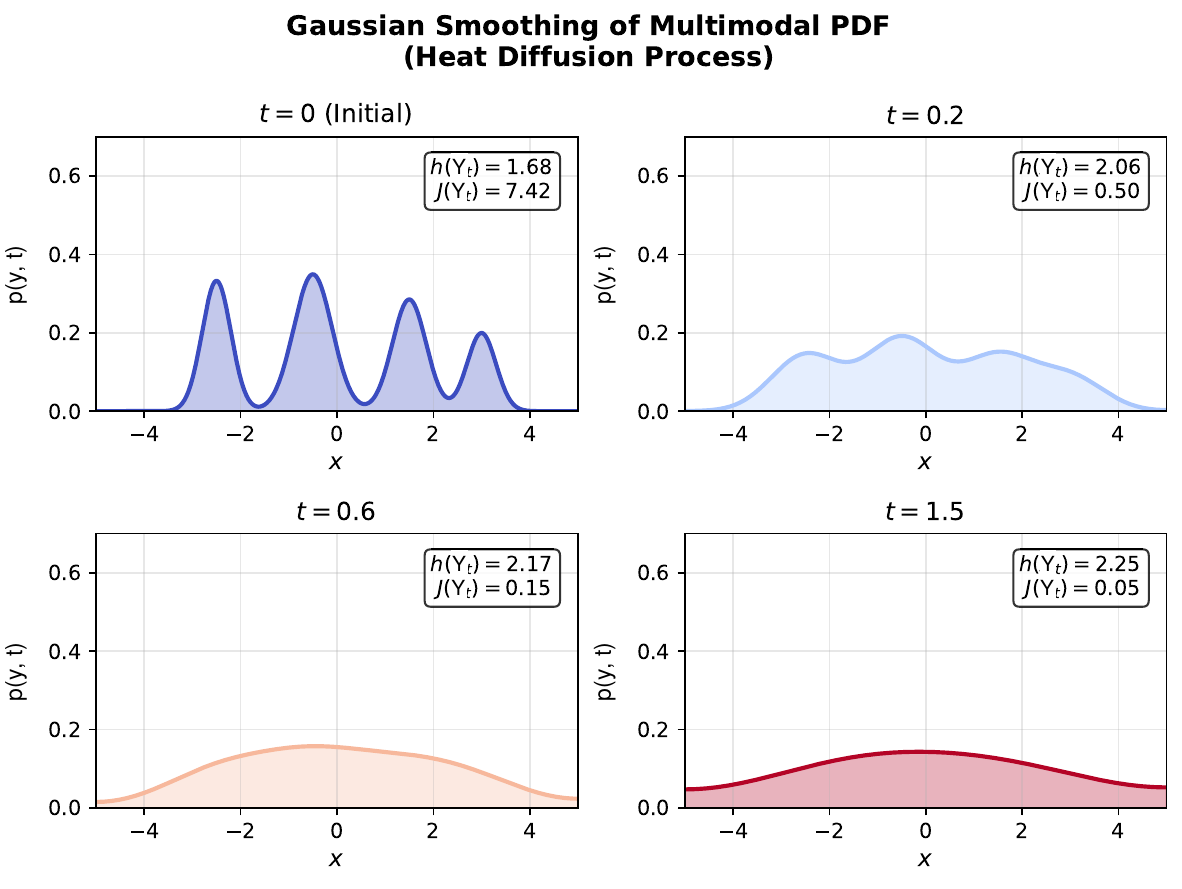}
  \caption{Visualization of the heat diffusion process for a 1D multimodal distribution (mixture of Gaussians). 
  As time $t$ progresses, the distribution $p(x,t)$ becomes smoother, leading to an increase 
  in differential entropy $h(X_t)$ and a rapid decay 
  in Fisher information $J(X_t)$. The Score-to-Fisher Bridge 
  method leverages this relationship via de Bruijn's identity: 
  the mutual information is essentially 
  accumulated Fisher-information gap relative to the Gaussian baseline 
  (the complexity of the distribution) over the diffusion time.}
  \label{fig:diffusion_process}
\end{figure}

\subsection{Denoising score matching (DSM) method}

DSM \cite{hyvarinen2005,vincent2011,song2019,song2021} provides a practical way to learn the score function 
of a Gaussian-smoothed distribution without computing any normalization constants or partition functions.
Let $p$ be a target density on $\mathbb{R}^n$ and define 
a Gaussian-smoothed sample:
\begin{align}
\bm v = \bm w + \sigma \bm \epsilon,\quad \bm w\sim p,
\quad \bm \epsilon\sim\mathcal N(\bm 0,\bm I_n).
\end{align}
Writing $p_\sigma \equiv p * \phi_{\sigma^2}$ for the convolution with a zero-mean Gaussian PDF $\phi_{\sigma^2}$ with
variance $\sigma^2$, the DSM loss function for a parametric score model $s_{\bm \theta}:\mathbb{R}^n\!\to\!\mathbb{R}^n$ ($\bm \theta$ represents the vector of adjustable parameters) is given by
\begin{align}
&\mathbb{E}_{\bm w,\bm \epsilon}\Big[\big\|s_\theta(\bm v)+\epsilon/\sigma\,\big\|^2\Big]
\\
&= \mathbb{E}_{\bm v\sim p_\sigma}\Big[\big\|\,s_\theta(\bm v)-\nabla_{\bm v}\log p_\sigma(\bm v)\,\big\|^2\Big] + \text{const},
\end{align}
where this equivalence was shown by Vincent \cite{vincent2011}. This means that 
minimizing the DSM loss function is equivalent to minimizing the expectation of the squared distance between the learned score function and the true score function,
which is often called the {\em Fisher divergence}.
This means that the unique minimizer of the DSM loss function satisfies 
$s_{\bm \theta}(\bm y) \simeq \nabla_{\bm y}\log p_\sigma(\bm y)$.
Equivalently, since $\bm v-\bm w=\sigma\bm \epsilon$, one may implement the loss function as $(\bm v-\bm w)/\sigma^2$:
\begin{align}
\mathbb{E}_{\bm w,\bm \epsilon}\big[\|s_\theta(\bm v)+(\bm v-\bm w)/\sigma^2\,\|^2\big],
\end{align}
which is an equivalent form of the DSM loss function.

In practice, we use a neural network to parameterize the score model 
$s_{\bm \theta}(\cdot)$. 
Such a score model is also called a {\em score network}. 
A mini-batch training with stochastic gradient descent 
or other variants such as Adam is typically used 
to minimize the DSM loss function.

\section{Estimating Mutual Information in AWGN Channels}

In this section, we present a method to estimate 
the MI for  AWGN channels described by
\begin{align}
Y_t = X + Z_t,\quad Z_t \sim \mathcal N(\bm 0, t\bm I_n),\quad t>0,
\end{align}
where $X \sim p_X$. The results obtained in this section will be 
extended to general nonlinear channels in Section \ref{sec:nonlinear-channel}.

\subsection{MI--Fisher relation}
For any Gaussian channel, the following proposition holds.
\begin{proposition}
For all $t>0$, the identity 
\begin{align} \label{eq:mi-fisher-relation}
  \frac{d}{dt}I(X;Y_t)=\frac{1}{2}J(Y_t)-\frac{n}{2t}
  \end{align}
  holds \cite{guo2005,palomar2006,reeves2018}. 
\end{proposition}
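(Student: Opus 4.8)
The plan is to reduce the claimed identity to the de~Bruijn identity already stated in the preliminaries, by writing the mutual information as a difference of two differential entropies and treating each term separately. First I would invoke the decomposition
\begin{align}
I(X;Y_t) = h(Y_t) - h(Y_t \mid X),
\end{align}
which is valid whenever the densities involved exist and the entropies are finite. The strategy is then to show that the second term contributes the $-\tfrac{n}{2t}$ correction while the first term contributes $\tfrac12 J(Y_t)$ directly through de~Bruijn.

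The key simplification lies in the conditional term. Since $Y_t = X + Z_t$ with $Z_t \sim \mathcal{N}(\mathbf{0}, t\mathbf{I}_n)$ independent of $X$, conditioning on $X = \bm x$ leaves the Gaussian law $\mathcal{N}(\bm x, t\mathbf{I}_n)$, whose differential entropy is translation invariant and hence does not depend on $\bm x$. Consequently $h(Y_t \mid X) = h(Z_t)$, and I would record the closed form
\begin{align}
h(Z_t) = \frac{n}{2}\log(2\pi e\, t) = \frac{n}{2}\log(2\pi e) + \frac{n}{2}\log t .
\end{align}
This term is fully explicit and carries all of the $t$-dependence I need on the noise side.

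Differentiating the decomposition in $t$, the output-entropy term is handled by the multivariate de~Bruijn identity, $\tfrac{d}{dt}h(Y_t) = \tfrac12 J(Y_t)$, while the conditional term differentiates elementarily, $\tfrac{d}{dt}h(Z_t) = \tfrac{n}{2t}$. Subtracting yields exactly
\begin{align}
\frac{d}{dt}I(X;Y_t) = \frac{1}{2} J(Y_t) - \frac{n}{2t},
\end{align}
which is the assertion.

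The only genuine subtlety is justifying the differentiation of $h(Y_t)$ in $t$ — namely the smoothness and tail decay of $p_{Y_t}$ that make $t \mapsto h(Y_t)$ differentiable and permit interchanging $\tfrac{d}{dt}$ with the integral defining the entropy. This is precisely the regularity content already packaged inside the de~Bruijn identity invoked above, so once that identity is granted the remaining steps are purely algebraic. I therefore expect no obstacle beyond citing de~Bruijn and carrying out the two elementary computations, noting only that one should confirm the standard hypotheses (finite Fisher information and integrability of $\nabla_{\bm y} p_{Y_t}$) under which de~Bruijn holds are satisfied for the prior $p_X$ at hand.
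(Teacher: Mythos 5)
Your proposal is correct and follows essentially the same route as the paper: both write $I(X;Y_t)=h(Y_t)-h(Y_t\mid X)=h(Y_t)-h(Z_t)$ with $h(Z_t)=\tfrac{n}{2}\log(2\pi e\,t)$, apply the multivariate de~Bruijn identity to $h(Y_t)$, and differentiate the Gaussian entropy term explicitly. Your version merely spells out the step $h(Y_t\mid X)=h(Z_t)$ and the regularity caveats that the paper's proof sketch leaves implicit.
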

\begin{IEEEproof}
A sketch of proof is the following.
Using the definition of MI,
\begin{align}
I(X;Y_t)=h(Y_t)-h(Z_t)
\end{align}
with 
$h(Z_t)=(n/2)\log(2\pi e\,t)$, we have
\begin{align}
\frac{d}{dt}I(X;Y_t)=\frac{1}{2}J(Y_t)-\frac{n}{2t}
\end{align}
by de~Bruijn identity. 
\end{IEEEproof}
It can be remarked that, from the proposition, we immediately obtain 
the MMSE--Fisher relation
\begin{align}
  \label{eq:mmse-fisher-relation}
\mathrm{mmse}(t)=n t - t^2 J(Y_t)
\end{align}
by the I--MMSE relation.

Note that Reeves \textit{et al.}~\cite{reeves2018} investigated 
related identities in the
context of linear Gaussian channels, highlighting its implications for
non-Gaussian input distributions. Their work extends
the I--MMSE framework beyond the classical Gaussian setting.

\subsection{Fisher Integral representations of mutual information}

From the identity \eqref{eq:mi-fisher-relation}
and the fact that $\lim_{t\to\infty}I(X;Y_t)=0$ 
(since $Y_t$ and $X$ become independent in the limit),
integrating over $t\in[T,\infty)$ yields
\begin{align} \label{int_rep}
I(X;Y_T)
=& \frac{1}{2}\int_T^\infty\!\Big(\frac{n}{t}-J(Y_t)\Big)\,dt.
\end{align}
The integrand decays as $O(1/t^2)$, since
\begin{align}
0\ \le\ \frac{n}{t}-J(Y_t)\;=\;\frac{\mathrm{mmse}(t)}{t^2}
\ \le\ \frac{\operatorname{tr}\operatorname{Cov}(X)}{t^2}.
\end{align}
The integral representation directly follows by combining the 
de Bruijn identity \cite{stam1959} with the 
I–MMSE relation \cite{guo2005,palomar2006} as well.

\begin{remark}[Connection to GEXIT function]
  The integrand in our estimation formula corresponds to the channel GEXIT function $g(t)$. 
  Although typically defined via conditional entropy, it satisfies the following identities in the AWGN setting:
  \begin{align}
      g(t) \equiv \frac{d}{dt} h(X|Y_t) = -\frac{d}{dt} I(X; Y_t) = \frac{n}{2t} - \frac{1}{2}J(Y_t).
      \label{eq:gexit_fisher}
  \end{align}
  By the area theorem, the integral of $g(t)$ characterizes the mutual information, 
  thus linking our score-based Fisher integral to the GEXIT analysis commonly used in coding theory.
\end{remark}

\subsection{Estimation of Fisher Information}

Using DSM learning, we can estimate the score function of $Y_t$.
With $p \leftarrow p_X$ and $\sigma^2 \leftarrow t$, 
the learned score model can approximate the score of
$Y_t=X+Z_t$:
\begin{align}
s_{\bm \theta}(\bm y) \approx \nabla_{\bm y}\log p_{Y_t}(\bm y) = s_{Y_t}(\bm y),
\end{align}
where $s_{\bm \theta}(\cdot)$ can be implemented with a neural network 
with parameter vector $\bm \theta$.
This enables estimation of 
$J(Y_t)=\mathbb{E}_{\bm y \sim p_{Y_t}(\cdot)}\|s_{Y_t}(\bm y)\|^2$ 
from samples of $\bm x \sim p_X$ and Gaussian
noises.

In particular, for a fixed noise level $t$, the 
DSM loss function can be written as
\begin{align}
\mathcal{L}_t(\bm \theta)
\;\equiv\;
\mathbb{E}_{\bm x,\bm \epsilon}\!\left[\left\|\,s_{\bm \theta}(\bm x+\sqrt{t}\,
\bm \epsilon)+\bm \epsilon/\sqrt{t}\,\right\|^2\right] \\
=
\mathbb{E}_{\bm y}\!
\left[\left\|\,s_{\bm \theta}(\bm y)-s_{Y_t}(\bm y)\,\right\|^2\right] 
+ \text{const},
\end{align}
where the constant term does not depend on $\bm \theta$. Hence, with a
 sufficiently rich function class for $s_{\bm \theta}$
(e.g., a universal neural network) and sufficiently many samples, any sequence of empirical minimizers
achieves
$
\mathbb{E}_{\bm y}\!\left[\left\|\,s_{\bm \theta}(\bm y)-s_{Y_t}(\bm y)\,
\right\|^2\right] \rightarrow 0,
$
i.e., the learned score converges to the true score 
in the Fisher divergence sense.

Moreover, with i.i.d. draws 
\(\bm y_i=\bm x_i+\sqrt{t}\,\bm \epsilon_i\) from \(p_{Y_t} (i = 1,2,...,N)\),
the Monte Carlo estimator:
\begin{align}
\widehat J_t \equiv \frac{1}{N}\sum_{i=1}^N 
\|s_{\bm \theta}(\bm y_i)\|^2
\end{align}
satisfies \(\widehat J_t \xrightarrow{\text{a.s.}} J(Y_t)\) as \(N\to\infty\) (law of large numbers),
and \(\sqrt{N}\,(\widehat J_t - J(Y_t)) \Rightarrow \mathcal N\big(0,\operatorname{Var}(\|s_{Y_t}(\bm y)\|^2)\big)\) (central limit theorem),
so the estimation error decays at rate \(O(N^{-1/2})\).
In practice we plug in the learned score \(s_{\bm \theta}\) in place of \(s_{Y_t}\); under Fisher divergence
consistency of \(s_{\bm \theta}\), the plug-in estimator
\((1/N)\sum_{i=1}^N \|s_{\theta}(\bm y_i)\|^2\) 
is expected to be also consistent for \(J(Y_t)\).

\subsection{Gaussian input case for validation}

In this subsection we provide closed-form expressions for the Gaussian-input case
$X\sim\mathcal N(\bm 0, P\,\bm I_n)$ in the AWGN channel family 
$Y_t=X+Z_t$ with
$Z_t\sim\mathcal N(\bm 0, t\,\bm I_n)$ where $P > 0$ represents 
the signal power.
It is well known \cite{cover2006} that $I(X;Y_t)=(n/2)\log(1+P/t)$;
see also \cite{guo2005} for the relation to MMSE.

\begin{proposition}[Closed forms for Gaussian inputs]\label{prop:gauss-gexit}
Let $X\sim\mathcal N(\bm 0, P\,\bm I_n)$ and $Z_t\sim\mathcal N(\bm 0, t\,\bm I_n)$,
independent. Then, for all $t>0$, we have the following closed forms:
\begin{align}
 &\frac{1}{n}I(X;Y_t)=\frac{1}{2}\log\!\Big(1+\frac{P}{t}\Big), 
\end{align}  
\begin{align}
  \frac{1}{n}h(X\mid Y_t)=\frac{1}{2}\log\!\Big(2\pi e\,\frac{Pt}{P+t}\Big),
\end{align}  
\begin{align}  
  &\mathrm{mmse}(t)=n\,\frac{Pt}{P+t},\quad
  J(Y_t)=\frac{n}{P+t}.
\end{align}
\end{proposition}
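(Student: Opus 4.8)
The plan is to reduce every quantity to the jointly Gaussian structure, under which each identity has a standard closed form. First I would observe that $\Xt$ and $\Zt$ are independent zero-mean isotropic Gaussians, so $\Yt=\Xt+\Zt\sim\N(\mathbf 0,(P+t)\mathbf I_n)$, and moreover $(\Xt,\Yt)$ is jointly Gaussian with $\Cov(\Xt,\Yt)=\Cov(\Xt,\Xt+\Zt)=P\mathbf I_n$. Because all covariances are scalar multiples of $\mathbf I_n$, the setup is fully isotropic, which keeps all subsequent computations clean and lets me work directly in $\R^n$ via traces.

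For the Fisher information, I would write the marginal density $p_{\Yt}$ explicitly as that of $\N(\mathbf 0,(P+t)\mathbf I_n)$, so that the score is the linear map $s_{\Yt}(\bm y)=-\bm y/(P+t)$. Substituting into the definition $J(\Yt)=\E[\norm{s_{\Yt}(\bm y)}^2]$ and using $\E[\norm{\Yt}^2]=\tr\Cov(\Yt)=n(P+t)$ yields $J(\Yt)=n/(P+t)$ at once. For the mutual information I would use $I(\Xt;\Yt)=h(\Yt)-h(\Zt)$ together with the Gaussian entropy formula $h(\N(\mathbf 0,\sigma^2\mathbf I_n))=(n/2)\log(2\pi e\,\sigma^2)$, applied with $\sigma^2=P+t$ and $\sigma^2=t$; the logarithms combine to $(n/2)\log(1+P/t)$.

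The MMSE then follows immediately from the MMSE--Fisher relation $\mmse(t)=nt-t^2 J(\Yt)$ established in \eqref{eq:mmse-fisher-relation}, substituting the value of $J(\Yt)$ just obtained to get $\mmse(t)=n\,Pt/(P+t)$. Finally, for the conditional entropy I would either invoke $h(\Xt\mid\Yt)=h(\Xt)-I(\Xt;\Yt)$ with $h(\Xt)=(n/2)\log(2\pi e\,P)$, or note that in the Gaussian case the posterior $p_{\Xt\mid\Yt}$ is itself Gaussian with covariance $\tfrac{Pt}{P+t}\mathbf I_n=P\mathbf I_n-\tfrac{P^2}{P+t}\mathbf I_n$ independent of the observation, so its entropy is simply the Gaussian entropy with that covariance.

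There is no genuine obstacle here: every step is a routine Gaussian computation. The only point requiring a little care is that the closed form for $h(\Xt\mid\Yt)$ relies on the posterior covariance being constant in $\bm y$ — a property special to the Gaussian prior — which is exactly what lets the conditional entropy reduce to a single Gaussian entropy rather than an integral over $\bm y$. Routing the MMSE through the already-proved relation \eqref{eq:mmse-fisher-relation} avoids recomputing the posterior mean and keeps the four identities mutually consistent.
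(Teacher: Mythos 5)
Your proposal is correct and follows essentially the same route as the paper: compute $Y_t\sim\mathcal N(\mathbf 0,(P+t)\mathbf I_n)$, read off the linear score to get $J(Y_t)=n/(P+t)$, obtain the MI from the Gaussian entropy difference, and derive $\mathrm{mmse}(t)$ from the relation $\mathrm{mmse}(t)=nt-t^2J(Y_t)$. One small point in your favor: for the conditional entropy you correctly use $h(X\mid Y_t)=h(X)-I(X;Y_t)$ with $h(X)=\tfrac{n}{2}\log(2\pi e P)$, whereas the paper's sketch writes $h(X\mid Y_t)=h(Y_t)-I(X;Y_t)$, which actually evaluates to $h(Y_t\mid X)=\tfrac{n}{2}\log(2\pi e\,t)$ and appears to be a typo; your version (and your alternative via the constant posterior covariance $\tfrac{Pt}{P+t}\mathbf I_n$) yields the stated closed form.
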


\begin{IEEEproof}[Sketch of proof: ]
From the Gaussian model $Y_t=X+Z_t$, 
it is known that $I(X;Y_t)=(n/2)\log(1+P/t)$ 
holds.
Moreover $Y_t\sim\mathcal N(\bm 0,(P+t)\bm I_n)$, hence we have
\begin{align}
s_{Y_t}(\bm y)=\nabla_{\bm y}\log p_{Y_t}(\bm y)=-(P+t)^{-1}\bm y	
\end{align}
and
$J(Y_t)=\mathbb E\|s_{Y_t}(Y_t)\|^2=n/(P+t)$.
The remaining identities follow by substituting $J(Y_t)$ into
$\mathrm{mmse}(t)=n t - t^2 J(Y_t)$ and by
$h(X|Y_t)=h(Y_t)-I(X;Y_t)$ with 
$h(Y_t)=(n/2)\log\big((2\pi e)(P+t)\big).$
\end{IEEEproof}

\subsection{Linear Gaussian channel case}

Consider a linear Gaussian channel model:
\begin{align}
Y_t = \bm A X + Z_t,\qquad 
Z_t \sim \mathcal N(\bm 0, t \bm I_n),
\label{eq:linear-channel}
\end{align}
where $\bm{A} \in \mathbb R^{n\times n}$ is a square, invertible matrix and
$X \sim \mathcal N(\bm 0, P \bm I_n)$. 
This setting preserves the Gaussian structure while introducing
nontrivial correlation through the linear mixing due to $A$.

\begin{proposition}[Closed forms for linear Gaussian channels]
Let $X \sim \mathcal N(\bm 0, P \bm I_n)$, 
$\bm{A} \in \mathbb R^{n\times n}$ nonsingular, and 
$Z_t \sim \mathcal N(\bm 0, t \bm I_n)$, independent. 
Then for all $t>0$:
\begin{align}
I(X;Y_t) 
&= \tfrac{1}{2}\log \det\!\Bigl(I_n + \tfrac{P}{t} \bm{A} \bm{A}^\top\Bigr),
\label{eq:I-linear} \\
\mathrm{mmse}(t) 
&= \mathrm{tr}\!\left(\Bigl(P^{-1} I_n + \tfrac{1}{t} \bm{A}^\top \bm{A} \Bigr)^{-1}\right),
\label{eq:mmse-linear} \\
J(Y_t) 
&= \mathrm{tr}\!\Bigl((P \bm{A} \bm{A}^\top + t \bm{I}_n)^{-1}\Bigr).
\label{eq:fisher-linear} 
\end{align}
\end{proposition}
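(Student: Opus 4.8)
The plan is to use that $(X,Y_t)$ is jointly zero-mean Gaussian, so all three quantities reduce to linear algebra on covariance matrices. First I would record the relevant blocks. Using $X\perp Z_t$ and $\Cov(X)=P\mathbf I_n$, I get $\Cov(X,Y_t)=\E[X(AX+Z_t)^\top]=PA^\top$ and the output covariance $\Sigma_Y\coloneqq\Cov(Y_t)=A(P\mathbf I_n)A^\top+t\mathbf I_n=PAA^\top+t\mathbf I_n$, so that $Y_t\sim\N(\mathbf 0,\Sigma_Y)$.

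The Fisher information and mutual information then follow directly. Since $Y_t$ is Gaussian, its score is $s_{Y_t}(\bm y)=-\Sigma_Y^{-1}\bm y$, hence
\begin{align}
J(Y_t)=\E\bigl[\norm{\Sigma_Y^{-1}Y_t}^2\bigr]=\tr\bigl(\Sigma_Y^{-1}\,\E[Y_tY_t^\top]\,\Sigma_Y^{-1}\bigr)=\tr(\Sigma_Y^{-1}),
\end{align}
which is \eqref{eq:fisher-linear}. For the mutual information I would decompose $I(X;Y_t)=h(Y_t)-h(Y_t\mid X)=h(Y_t)-h(Z_t)$, where the second equality holds because conditioning on $X$ leaves only the additive noise $Z_t$ (differential entropy being translation invariant). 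Substituting the Gaussian entropies $h(Y_t)=\tfrac12\log\bigl((2\pi e)^n\det\Sigma_Y\bigr)$ and $h(Z_t)=\tfrac{n}{2}\log(2\pi e\,t)$, the $(2\pi e)^n$ factors cancel and $\log\det\Sigma_Y-n\log t=\log\det(t^{-1}\Sigma_Y)=\log\det(\mathbf I_n+\tfrac{P}{t}AA^\top)$ yields \eqref{eq:I-linear}.

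The only step needing genuine care is the MMSE. Rather than compute $\E[X\mid Y_t]=PA^\top\Sigma_Y^{-1}Y_t$ and its error covariance $P\mathbf I_n-P^2A^\top\Sigma_Y^{-1}A$ directly, I would invoke the standard Bayesian linear--Gaussian posterior: with prior $X\sim\N(\mathbf 0,P\mathbf I_n)$ and likelihood $Y_t\mid X\sim\N(AX,t\mathbf I_n)$, the posterior is Gaussian with precision equal to the sum of the prior and data precisions, $\Cov(X\mid Y_t)^{-1}=P^{-1}\mathbf I_n+t^{-1}A^\top A$. Because for jointly Gaussian vectors the conditional covariance does not depend on the realization of $Y_t$, I conclude $\mathrm{mmse}(t)=\E\bigl[\norm{X-\E[X\mid Y_t]}^2\bigr]=\tr\Cov(X\mid Y_t)=\tr\bigl((P^{-1}\mathbf I_n+t^{-1}A^\top A)^{-1}\bigr)$, which is \eqref{eq:mmse-linear}.

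I expect the main obstacle to be bookkeeping rather than anything conceptual: reconciling the two expressions for the error covariance---the precision form $(P^{-1}\mathbf I_n+t^{-1}A^\top A)^{-1}$ versus the Schur-complement form $P\mathbf I_n-P^2A^\top\Sigma_Y^{-1}A$ obtained from the direct route---is precisely the Woodbury identity, so routing the MMSE through the posterior precision avoids that manipulation entirely. As a sanity check I would confirm that setting $A=\mathbf I_n$ recovers the Gaussian-input formulas of Proposition~\ref{prop:gauss-gexit}.
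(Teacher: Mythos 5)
Your proposal is correct and follows essentially the same route as the paper's sketch: Gaussianity of $Y_t$ with covariance $PAA^\top + t\mathbf I_n$ gives the mutual information via the Gaussian entropy determinant formula and the Fisher information via the Gaussian score, while the MMSE comes from the linear--Gaussian conditional covariance. The only cosmetic difference is that you derive the MMSE through the posterior precision form rather than the linear MMSE estimator's Schur-complement form, which, as you note, are equivalent by the Woodbury identity.
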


\begin{IEEEproof}[Sketch of proof:]
Since $AX$ is Gaussian with covariance $P \bm{A} \bm{A}^\top$, 
we have $Y_t \sim \mathcal N(\bm 0, P \bm{A} \bm{A}^\top + t \bm{I}_n)$. 
The mutual information \eqref{eq:I-linear} follows from the
determinant formula for Gaussian differential entropy 
\cite{cover2006,palomar2006}. 
The Fisher information \eqref{eq:fisher-linear} is obtained directly from
the Gaussian score function 
$s_{Y_t}(\bm y) = -(P \bm{A} \bm{A}^\top + t \bm{I}_n)^{-1}\bm y$. 
The MMSE formula \eqref{eq:mmse-linear} comes from 
the linear MMSE estimator of $X$ given $Y_t$. 
\end{IEEEproof}

\section{Estimation of Mutual Information}

Using the score-based estimation of the Fisher information, 
we can evaluate $I(X;Y_T)$ by numerical integration.
In this section, we present details of the method 
to estimate the MI for the AWGN channel family described by (\ref{eq:awgn-channel}).

\subsection{Score-based Estimation Method for MI}

The proposed estimation algorithm is summarized in Algorithm \ref{MI_MMSE_alg}.
The core of the algorithm is the DSM learning for $p_{Y_t}$ 
and the estimation of the Fisher information from the score model.

\begin{algorithm}[htbp]
  \caption{Score-to-Fisher bridge: Estimation of MI}
  \label{MI_MMSE_alg}
 \begin{algorithmic}[1]
   \STATE Prepare a logarithmically spaced grid over $t$ (details of the log grid are described in Subsection \ref{sec:log-domain-trapezoid-integration}).
   \STATE Sample generation: Generate a random mini-batch $\mathcal B$ consisting of 
   multiple pairs of $(\bm x, \bm \epsilon)$ for mini-batch training with 
   forward sampling of $X$:
   $\bm x \sim p_X, \ 
     \bm\epsilon \sim \mathcal N(\bm 0, \bm I_n).
   $
   \STATE DSM learning for $p_{Y_t}$: train the score model parameters by minimizing the empirical DSM loss:
   \begin{align}
     \sum_{(\bm x, \bm \epsilon) \in \mathcal B}\big\|\,s_{\bm \theta}(\bm x+\sqrt{t}\,\bm\epsilon)\;+\;\bm\epsilon/\sqrt{t}\,\big\|^2.
   \end{align}
   Any stochastic optimizer,  
     such as Adam, can be applied to minimize the empirical loss function. 
     \STATE Estimating the Fisher information: for each $t$, generate $\bm y_1,\ldots,\bm y_N$ (e.g., by forward sampling $\bm y_i=\bm x_i+\sqrt{t}\,\bm\epsilon_i$), and compute the estimated Fisher information:
     \begin{align}
       \widehat J_t = \frac{1}{N}\sum_{i=1}^N \big\| s_{\bm \theta}(\bm y_i) 
       \big\|^2
     \end{align}
     as an approximation of $J(Y_t)$. 
     \STATE Estimating the MI via the Fisher integral representation:
     \begin{align}
      \label{eq:mi-estimation}
       \widehat{I(X;Y_T)} = \frac{1}{2}\int_T^{\infty}\Big(\frac{n}{t}-\widehat J_t\Big)\,dt,
     \end{align}
     evaluated numerically.
\end{algorithmic}
\end{algorithm}

It should be noted that if the probability density $p_{Y_t}$ of the channel output $Y_t$ is known, and its score function $s_{Y_t}(y) = \nabla_y \log p_{Y_t}(y)$ is available in closed form (e.g., in the case of an identity function $f$ with a Gaussian Mixture Model (GMM) input $X$), 
then Step 3 (DSM learning) becomes unnecessary. 
In such cases, one can directly use this analytical score function in Step 4 to perform the Monte Carlo estimation of the Fisher information. 
This approach eliminates the score approximation error while retaining the advantages over conventional methods in terms of computational cost and high-dimensional scalability. 
However, the primary focus of this paper is on more general nonlinear channels where such an analytical solution is intractable or unavailable.

in Algorithm \ref{MI_MMSE_alg}, we consider two alternatives for the DSM learning:
\begin{itemize}
  \item {Scheme A (per-$t$ training).}
  For each fixed $t$, minimize
  \[
    \mathbb{E} \big\| s_\theta(\bm x+\sqrt{t}\bm\epsilon) + \bm\epsilon/\sqrt{t} \big\|^2
  \]
  to obtain the learned score model. 
  \item {Scheme B (noise-conditional model).}
  Learn $s_\theta(\bm y,t)$ by minimizing
  \begin{align} 
    \mathbb{E}_{t\sim\mathcal T}\,
    \lambda(t)\ \mathbb{E}_{\bm x,\bm\epsilon}\, \big\| s_{\bm \theta}(\bm x+\sqrt{t}\,\bm\epsilon,t) + \bm\epsilon/\sqrt{t} \big\|^2,
  \end{align}
  where $\mathcal T$ is a training distribution over $t$ (e.g., logarithmically spaced as described later) 
  and $\lambda(t)$ is a weight function, e.g., $\lambda(t)=t$.
\end{itemize}
In the case of Scheme B, the score network has two inputs: 
the channel output $\bm y$ and the noise level $t$.

\subsection{Log-domain trapezoid integration}
\label{sec:log-domain-trapezoid-integration}
The integrand in the Fisher integral representation 
(\ref{eq:fisher-integral-representation}) of the MI is
smooth and may have a long tail. The choice of the integration grid is
critical for achieving high numerical accuracy. In this subsection,
we present a geometric (log-spaced) grid for the integration.
For $t\in[t_{\min},t_{\max}]$ with $M$ points, take $\log t$ to be equally spaced:
\begin{align}
  t_k = t_{\min}\left(\frac{t_{\max}}{t_{\min}}\right)^{\frac{k}{M-1}},
  \qquad k=0,1,\ldots,M-1.
  \label{eq:geom-grid}
\end{align}
Equivalently in standard deviation, $\sigma_k \equiv \sqrt{t_k}$ with
$\sigma_k=\sigma_{\min}(\sigma_{\max}/\sigma_{\min})^{k/(M-1)}$.

Rather than applying the trapezoid rule directly on the non-uniform $t$-grid,
we perform the change of variables $u=\log t$ to obtain uniform spacing 
in the log domain.
Using the integral representation (\ref{eq:fisher-integral-representation}),
and substituting $u=\log t$ (so $dt=e^u du$ and $t=e^u$), we obtain
\begin{align}
  I(X;Y_T)
  = \frac12\int_{\log T}^{\infty}\!\Big(n - e^u\,J(Y_{e^u})\Big)\,du.
  \label{eq:log-integral}
\end{align}

Since the $t$-grid \eqref{eq:geom-grid} corresponds to uniform spacing
in $u=\log t$ with step size $\Delta_u \equiv \log(t_{\max}/t_{\min})/(M-1)$,
we apply the standard trapezoid rule with uniform spacing:
\begin{align}
  \widehat{I}(X;Y_T)
  \approx \frac12 \sum_{k=k_T}^{M-2} \frac{\ell(u_k)+\ell(u_{k+1})}{2}\,\Delta_u
  \;+\; \text{tail},
\end{align}
where $\ell(u)\equiv n - e^u\,\widehat{J}(Y_{e^u})$,
and $k_T$ is the smallest index with $t_{k_T}\ge T$.
This log-domain integration offers several benefits over direct $t$-domain
trapezoid integration: (1) uniform grid spacing reduces discretization error,
(2) the transformation naturally handles a wide dynamic range in $t$, and
(3) convexity bias is significantly reduced. In our experiments,
this approach improved MI estimation accuracy by over an order of magnitude.

The tail contribution above $t_{\max}$ can be estimated using the asymptotic behavior.
For Gaussian inputs, $\mathrm{mmse}(t)\to \operatorname{tr}\mathrm{Cov}(X)$ as $t\to\infty$, giving
\begin{align}
  \text{tail}
  \approx \frac{1}{2}\int_{t_{\max}}^\infty \frac{\mathrm{mmse}(t)}{t^2}\,dt
  \approx \frac{1}{2}\,\frac{\operatorname{tr}\mathrm{Cov}(X)}{t_{\max}}.
  \label{eq:tail-correction}
\end{align}
The trace $\operatorname{tr}\mathrm{Cov}(X)$ can be estimated from samples of $p_X$ and used as a tail correction.

While the tail correction formula \eqref{eq:tail-correction} is exact for Gaussian inputs,
its leading-order term $\mathrm{tr}\,\mathrm{Cov}(X)/t_{\max}$ 
remains valid for arbitrary inputs, since $\mathrm{mmse}(t)\to \mathrm{tr}\,\mathrm{Cov}(X)$ as $t\to\infty$.
Therefore, the same tail correction can be applied as a universal approximation,
with the understanding that higher-order terms may depend on the non-Gaussianity of $X$.

\subsection{Stein Calibration}
In practice, the learned score function $s_{\theta}(\mathbf{y})$ may exhibit a scaling bias. 
To mitigate this, we apply Stein calibration to the score estimates. Leveraging Stein's identity, 
we compute a scalar correction factor $c$ such that the calibrated score 
$\tilde{s}_{\theta}(\mathbf{y}) = c s_{\theta}(\mathbf{y})$ satisfies the empirical 
moment constraint $\mathbb{E}[\mathbf{Y}_t^{\top} \tilde{s}_{\theta}(\mathbf{Y}_t)] = -n$, 
assuming centered data. The estimated Fisher information is then computed using 
this calibrated score.

\subsection{Alternative Score Estimation via Velocity Fields}
While we employed DSM to estimate the score function, 
the SFB framework is inherently estimator-agnostic. Recent generative paradigms, 
such as {\em flow matching} \cite{lipman2023flowmatching} and {\em rectified flow} \cite{liu2023rectifiedflow}, 
learn a velocity field 
that drives the probability flow, typically along straight-line trajectories. 
Since the score function can be mathematically derived from the learned velocity field 
(given the specific scaling factor), these methods offer a promising alternative 
route for score estimation. Leveraging such techniques could potentially improve 
the numerical stability of MI estimation, particularly in regimes where DSM struggles 
with curvature in the diffusion path.
See Appendix G for more details.

\section{Nonlinear channel}
\label{sec:nonlinear-channel}

Let us consider our primal target channel:
$$
Y_t = f(X) + Z_t,
$$
where $f: \mathbb{R}^n \rightarrow \mathbb{R}^n$ 
is a deterministic function\footnote{An extension to a general case 
$f: \mathbb{R}^n \rightarrow \mathbb{R}^m$ is straightforward but 
we assume $m=n$ for simplicity.}. 
The mutual information equivalence 
presented in the following proposition is a key observation 
which enables the score-based estimation to be applied 
to general nonlinear channels.

\subsection{Mutual information equivalence}

Let $W \equiv f(X)$ be a measurable deterministic mapping, and consider $Y_t=W+Z_t$ with $Z_t\sim\mathcal N(\bm 0,t\bm I_n)$ independent of $(X,W)$. Then the conditional law of $Y_t$ given $X$ depends on $X$ only through $W$.
   This means that $X\to W\to Y_t$ forms a Markov chain and 
   equivalently $Y_t\perp X\mid W$.

We can prove {\em mutual information equivalence} in the following proposition.
\begin{proposition}
\label{prop:mutual-info-equivalence}
Let $Y_t = W + Z_t (t > 0)$ with $Z_t\sim\mathcal N(\bm 0,t\bm I_n)$ independent of $(X,W)$ and $W = f(X)$,
where $f$ is a measurable deterministic function.
Then we have 
\begin{align}
I(X;Y_t)=I(W;Y_t).
\end{align}
\end{proposition}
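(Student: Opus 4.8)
The plan is to reduce the claim to the Markov-chain structure already identified in the surrounding text, namely $X \to W \to Y_t$, and then invoke the chain rule for mutual information. First I would make the Markov property precise: because $Z_t$ is independent of the pair $(X,W)$ and $Y_t = W + Z_t$, the conditional law of $Y_t$ given $(X,W)$ is $\mathcal{N}(W, t\mathbf{I}_n)$, which depends on the pair only through $W$. Together with $W=f(X)$ this yields $Y_t \perp X \mid W$, i.e. the chain $X \to W \to Y_t$.

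Next I would expand the joint mutual information $I(X,W;Y_t)$ in two ways using the chain rule:
\begin{align}
I(X,W;Y_t) &= I(W;Y_t) + I(X;Y_t \mid W), \\
I(X,W;Y_t) &= I(X;Y_t) + I(W;Y_t \mid X).
\end{align}
The first conditional term vanishes by the Markov property established above: $I(X;Y_t \mid W)=0$ precisely because $Y_t \perp X \mid W$. The second conditional term vanishes because $W=f(X)$ is a deterministic function, so conditioned on $X$ the variable $W$ is a point mass and carries no further randomness; hence $I(W;Y_t \mid X)=0$. Equating the two expansions gives $I(W;Y_t)=I(X;Y_t)$, which is the assertion.

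As an equivalent and even shorter route, I could apply the data-processing inequality twice. The chain $X \to W \to Y_t$ gives $I(X;Y_t)\le I(W;Y_t)$; and since $W=f(X)$ is a function of $X$, the trivial chain $W \to X \to Y_t$ also holds — given $X$, both $W$ and the conditional law of $Y_t$ are determined — yielding $I(W;Y_t)\le I(X;Y_t)$. Combining the two inequalities forces equality.

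I do not expect a serious obstacle here, since the result is essentially structural. The only point requiring care is well-definedness of the information quantities and the legitimacy of the chain rule: the Gaussian convolution guarantees that $Y_t$ admits a smooth density for any law of $W$, and measurability of $f$ makes $(X,W,Y_t)$ a well-defined random triple, so all the mutual informations above are meaningful and the chain-rule manipulations are valid. The most delicate single step is the claim $I(W;Y_t \mid X)=0$: it rests on $W$ being a deterministic function of $X$, so that the conditional distribution of $W$ given $X$ degenerates to a Dirac mass, which is what I would state explicitly.
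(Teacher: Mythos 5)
Your proof is correct and follows essentially the same route as the paper: both equate the two chain-rule expansions of $I(X,W;Y_t)$, argue $I(X;Y_t\mid W)=0$ from the Markov chain $X\to W\to Y_t$ and $I(W;Y_t\mid X)=0$ from $W=f(X)$, and both also note the alternative double data-processing-inequality argument. Your added remarks on well-definedness (Gaussian smoothing of $Y_t$, measurability of $f$) are a harmless refinement, not a departure.
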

\begin{IEEEproof}
By the chain rule of mutual information, we have
\begin{align}
I(X;Y_t)=I(W;Y_t)+I(X;Y_t\mid W)-I(W;Y_t\mid X).
\end{align}
Here $I(X;Y_t\mid W)=0$ because of the conditional independence $Y_t\perp X\mid W$, and $I(W;Y_t\mid X)=0$  holds because $W$ is a function of $X$.
Hence $I(X;Y_t)=I(W;Y_t)$.
An equivalent proof uses the data processing inequality twice: $X\to W\to Y_t$ gives $I(X;Y_t)\le I(W;Y_t)$, while $W\to X\to Y_t$ (since $W=f(X)$) gives $I(W;Y_t)\le I(X;Y_t)$, yielding equality.
\end{IEEEproof}

From Proposition \ref{prop:mutual-info-equivalence}, we have 
the MI equivalence $I(X;Y_t)=I(W;Y_t)$, which
implies that the score-based estimation method presented in the previous section can be applied to evaluate $I(X;Y_T)$.
Namely, instead of $X$, from the samples of $W$, 
we can estimate the Fisher information of $Y_t$ 
and  the estimate of the Fisher information can be used 
	to evaluate $I(W;Y_t)$. The equivalence relation $I(X;Y_t)=I(W;Y_t)$ 
  enables to obtain the estimate of $I(X;Y_t)$ from the estimate of $I(W;Y_t)$.
Note that no invertibility of $f$ is required in 
this estimation method.

\subsection{Score-based estimation method for nonlinear channels}

In the following discussion, we assume that the evaluation of $f$ is 
computationally feasible. Namely, sampling from $p_W$ is feasible.
Some parts of Algorithm \ref{MI_MMSE_alg} should be
replaced by the following steps:
\begin{itemize}
  \item Sample $W$ by $W = f(X)$ where $X \sim p_X$.
  \item Estimate the Fisher information of $Y_t$ from the samples of $W$ 
  by the score-based estimation in Algorithm \ref{MI_MMSE_alg}.
  \item Use the estimated Fisher information 
  to evaluate the Fisher integral form of MI.
  \item Use the equality $I(X;Y_t)=I(W;Y_t)$ to obtain the estimate of $I(X;Y_t)$.
\end{itemize}
The estimation procedure is still a single-loop procedure, 
which does not require the access to the posterior density $p_{X|Y_t}(\cdot|\cdot)$.

\subsection{Error Analysis and Convergence Rates}
This section provides a qualitative analysis of the estimation error for the proposed estimator $\hat{I}$. 
The total estimation error can be decomposed into four distinct components:
\[
    |\hat{I} - I| \;\le\; 
    \varepsilon_{\mathrm{score}} + \varepsilon_{\mathrm{MC}} + \varepsilon_{\mathrm{quad}} + \varepsilon_{\mathrm{tail}}.
\]
Each component behaves as follows under standard regularity assumptions:

\begin{enumerate}
    \item \textbf{Score Approximation Error ($\varepsilon_{\mathrm{score}}$):} 
    This term arises from the imperfect learning of the score function $s_\theta \approx \nabla \log p_t$. It scales with the square root of the Fisher divergence loss. Improving the neural network capacity and training budget reduces this bias.
    
    \item \textbf{Monte Carlo Variance ($\varepsilon_{\mathrm{MC}}$):} 
    Since the Fisher information $J(Y_t)$ is estimated via sample averaging, this stochastic error follows the standard Central Limit Theorem scaling of $O(N^{-1/2})$, where $N$ is the batch size per grid point.
    
    \item \textbf{Quadrature Error ($\varepsilon_{\mathrm{quad}}$):} 
    The discretization of the integral over $u=\log t$ using the trapezoidal rule introduces a deterministic error. On a uniform grid with $M$ points (step size $\Delta u \propto M^{-1}$), this error decays rapidly as $O(M^{-2})$.
    
    \item \textbf{Truncation Error ($\varepsilon_{\mathrm{tail}}$):} 
    Restricting the integration to $[t_{\min}, t_{\max}]$ introduces bias from the missing tails. The error typically decays as $O(t_{\max}^{-1})$ for the large-$t$ tail and polynomially (e.g., $O(t_{\min}^\beta)$) for the small-$t$ tail.
\end{enumerate}

To minimize the total error efficiently, one must balance these components.
Since the quadrature error $\varepsilon_{\mathrm{quad}} \sim O(M^{-2})$ decays much faster than the Monte Carlo error $\varepsilon_{\mathrm{MC}} \sim O(N^{-1/2})$, a relatively coarse grid $M$ is often sufficient compared to the sample size $N$. 
Theoretically, balancing the rates suggests a scaling of $M \asymp N^{1/5}$.
In practice, we recommend fixing a sufficiently wide integration range $[t_{\min}, t_{\max}]$ to suppress $\varepsilon_{\mathrm{tail}}$, and then prioritizing a large batch size $N$ to mitigate the dominant Monte Carlo variance.

\section{Numerical experiments}

\subsection{Gaussian input case for validation}
To validate the proposed score-based estimation method, we first consider the
Gaussian input case where exact closed-form expressions are available. 
Specifically, we consider the Gaussian input case $X \sim
\mathcal{N}(\bm 0, P \bm I_n)$ with the AWGN channel model 
$Y_t = X + \sqrt{t} Z$, where $Z \sim \mathcal{N}(\bm 0, \bm I_n)$.

\subsubsection{Experimental Setup}

We implemented the DSM learning for $p_{Y_t}$ with the following configuration:
  The score network is implemented as a fully connected neural network 
  consisting of 3 hidden layers,
  each containing 128 units with SiLU (Swish) activation functions. 
  The input and output dimensions match
   the problem dimension $n$, where we test with $n \in \{4, 8, 16\}$ 
   to evaluate scalability.
  For the training configuration, we set the signal power to $P = 1.0$ 
  and construct a logarithmically
  spaced grid of $M = 10$ noise variance values ranging 
  from $t \in [P/200, 200P]$, which covers four
  orders of magnitude. At each noise level $t$, we 
  train a separate score network for 300 iterations
  using mini-batches of 4096 samples. The optimization 
  is performed using the Adam optimizer with a
  learning rate of $10^{-3}$, and gradient clipping is 
  applied with a maximum norm of 1.0 to ensure
  stable training.

  \subsubsection{Validating the Fisher Information Estimate}

  For Fisher information estimation, we generate 100,000 Monte Carlo samples at each noise variance $t$
  to ensure statistical reliability. The ground truth values are computed using the analytical formula
  $J(Y_t) = n/(P + t)$, which is available for Gaussian input distributions.

Figure~\ref{fig:gaussian_fisher} presents the estimated Fisher information $\hat{J}(Y_t)$ compared with
 the analytical ground truth $J(Y_t) = n/(P+t)$ for dimensions $n \in \{4, 8, 16\}$. The results
demonstrate excellent agreement between the DSM-based estimates and theoretical values across multiple
orders of magnitude in the noise variance $t$.
\begin{figure}[htbp]
    \centering
    \includegraphics[width=0.48\textwidth]{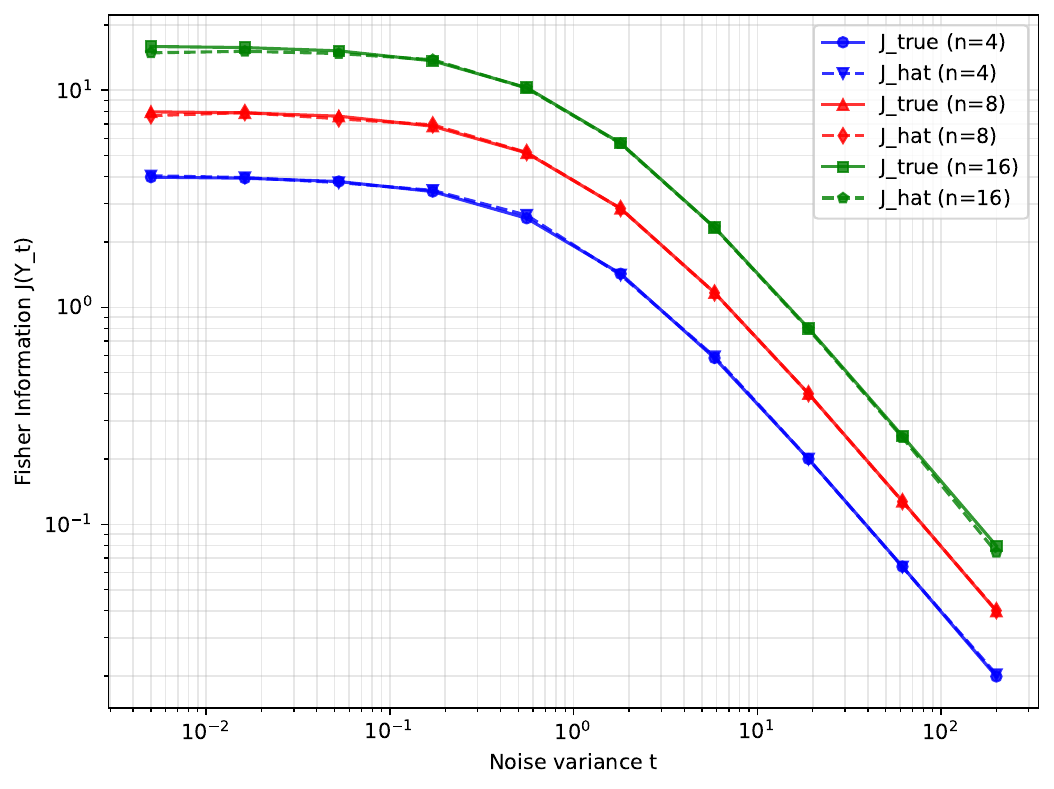}
    \caption{Fisher information $J(Y_t)$ versus noise variance $t$ for Gaussian input with $P=1$ and
dimensions $n \in \{4, 8, 16\}$. Solid lines with circular markers show ground truth values $J(Y_t) =
 n/(P+t)$, while dashed lines with triangular markers show estimated values $\hat{J}(Y_t)$. }
    \label{fig:gaussian_fisher}
\end{figure}
The results confirm that:
\begin{enumerate}
    \item The DSM-based score estimation accurately captures the Fisher information across a wide range
 of noise levels ($t \in [0.005, 200]$).
    \item The method scales well with dimension, maintaining low relative errors even for $n=16$.
    \item The median relative error remains below 1.3\% for all tested dimensions, validating the
theoretical framework.
\end{enumerate}

  \subsubsection{Mutual Information Estimation}

  We demonstrate the complete pipeline from 
  the Score-to-Fisher bridge for MI estimation using the integral
  representation of MI.
  The experiment uses the same configuration as previous tests with $n=4$ and $P=1.0$, implementing
  log-domain trapezoid integration with tail correction.
  Figure~\ref{fig:mi_estimation} presents the estimated mutual information (per symbol) 
  $\hat{I}(X;Y_t)/n$ compared with
  the analytical ground truth $I(X;Y_t)/n = (1/2)\log(1 + P/t)$ for Gaussian input. The DSM-based
 estimates demonstrate excellent agreement with the theoretical values across the entire noise variance
 range.

  \begin{figure}[htbp]
      \centering
      \includegraphics[width=0.48\textwidth]{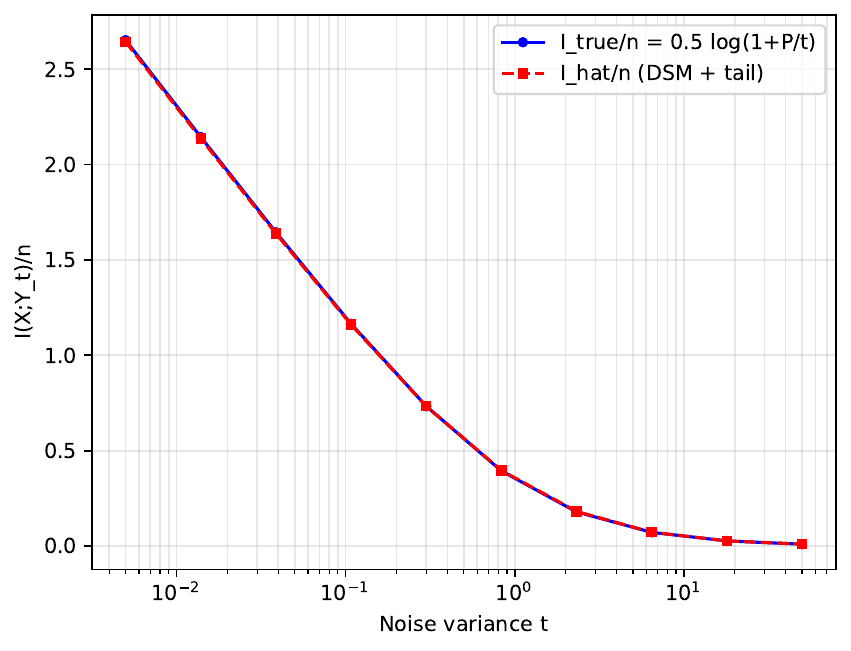}
      \caption{Mutual information (per symbol) $I(X;Y_t)/n$ versus noise variance $t$ for Gaussian input. }
      \label{fig:mi_estimation}
  \end{figure}

  A key contribution of this experiment is the log-domain trapezoid integration method. Compared to standard
  trapezoid integration on the $t$-grid, the log-domain approach improved estimation accuracy by over an
  order of magnitude, reducing median relative errors from approximately 26\% to 0.4\%. This substantial
  improvement arises from: (1) uniform grid spacing in the log domain, (2) reduced convexity bias, and
  (3) better numerical handling of the wide dynamic range in $t$ values.

  The tail correction, based on the asymptotic behavior $\text{mmse}(t) \sim \text{tr Cov}(X)$ for large
  $t$, proves essential for maintaining accuracy across the full integration range. Without tail
  correction, the 90th percentile error degrades significantly to over 40\%.

These validation results on Gaussian inputs, where analytical solutions exist, provide strong
evidence for the reliability of the proposed Score-to-Fisher bridge method and related theoretical framework presented in the previous sections.

\subsubsection{Noise-conditional model}

  We investigate a noise-conditional score model $s_{\bm \theta}(\bm y, t)$ (Scheme B) 
  that takes both
  observation $\bm y$ and noise variance $t$ as inputs, eliminating the need for separate
  per-$t$ training. The model uses Gaussian Fourier projection to embed scalar $t$
  values, which are injected into each hidden layer. Training samples $t$ from a
  log-uniform distribution over $[P t_{\min}, P t_{\max}]$ with loss
  weighting $\lambda(t) = t$ for 20,000 steps.

  For Gaussian input with $n=4$ and $P=1.0$, this approach achieves 6.61\% median
  relative error across 12 evaluation points spanning $t \in [0.005, 50]$.
  Figure~\ref{fig:noise_conditional_mi} demonstrates reasonable agreement with analytical
  values across four orders of magnitude in noise variance. We can observe the tendency that 
  the noise-conditional DSM estimate slightly underestimates the MI 
  at low noise variance. This result may be due to insufficient training iterations 
  for a conditional model.
  However, this unified framework requires only one neural network instead of 12 separate models,
  significantly reducing computational cost.

  \begin{figure}[htbp]
      \centering
      \includegraphics[width=0.48\textwidth]{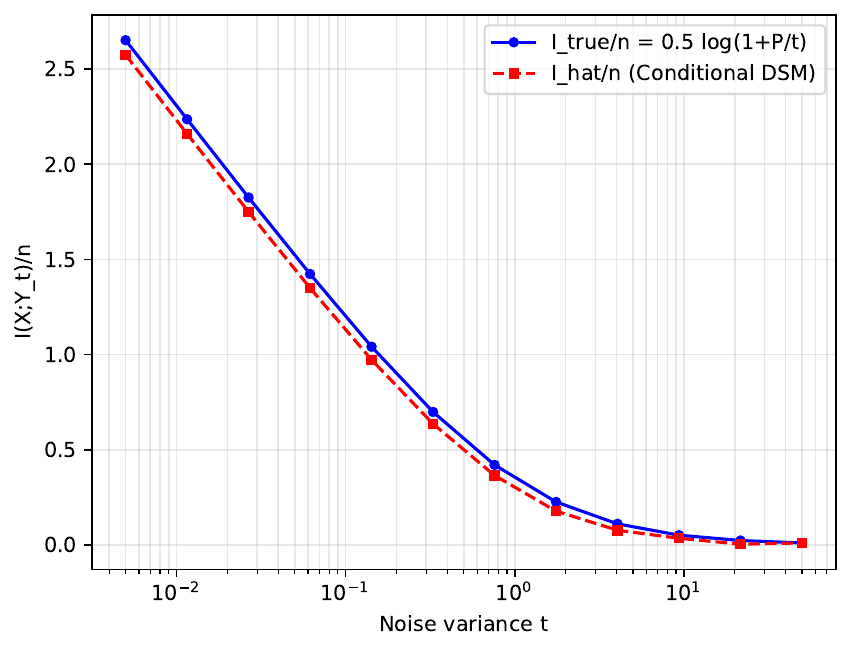}
      \caption{Noise-conditional model for mutual information estimation results for Gaussian input ($n=4$, $P=1$). Scheme B was used.}
      \label{fig:noise_conditional_mi}
  \end{figure}

\subsection{Discrete input case}

\subsubsection{Mutual Information Estimation}

To validate the approach for discrete input distributions, 
we consider binary phase
  shift keying (BPSK) modulation with $X \in \{-\sqrt{P}, +\sqrt{P}\}$ (equal
  probability), $P=1.0$, and dimension $n=1$. Unlike continuous distributions, BPSK
  admits exact mutual information calculation via numerical integration of the Gaussian mixture distribution providing ground truth for validation.

  We implement per-$t$ training (Scheme A) over $M=12$ logarithmically spaced points in
  $t \in [0.005, 50]$, with 1000 training iterations per noise level and 200,000 Monte
  Carlo samples for Fisher information estimation.

  Figure~\ref{fig:bpsk_mi_dsm} compares DSM estimates with exact BPSK mutual information
  and the Gaussian input baseline $I_{\text{Gauss}}(X;Y_t)/n = \frac{1}{2}\log(1 + P/t)$.
  The DSM estimates achieve excellent agreement with exact values, with median relative
  errors of 0.6\% for low noise ($t < 0.1$) and 2.8\% for mid-range ($0.1 \leq t < 5$).

  \begin{figure}[htbp]
      \centering
      \includegraphics[width=0.48\textwidth]{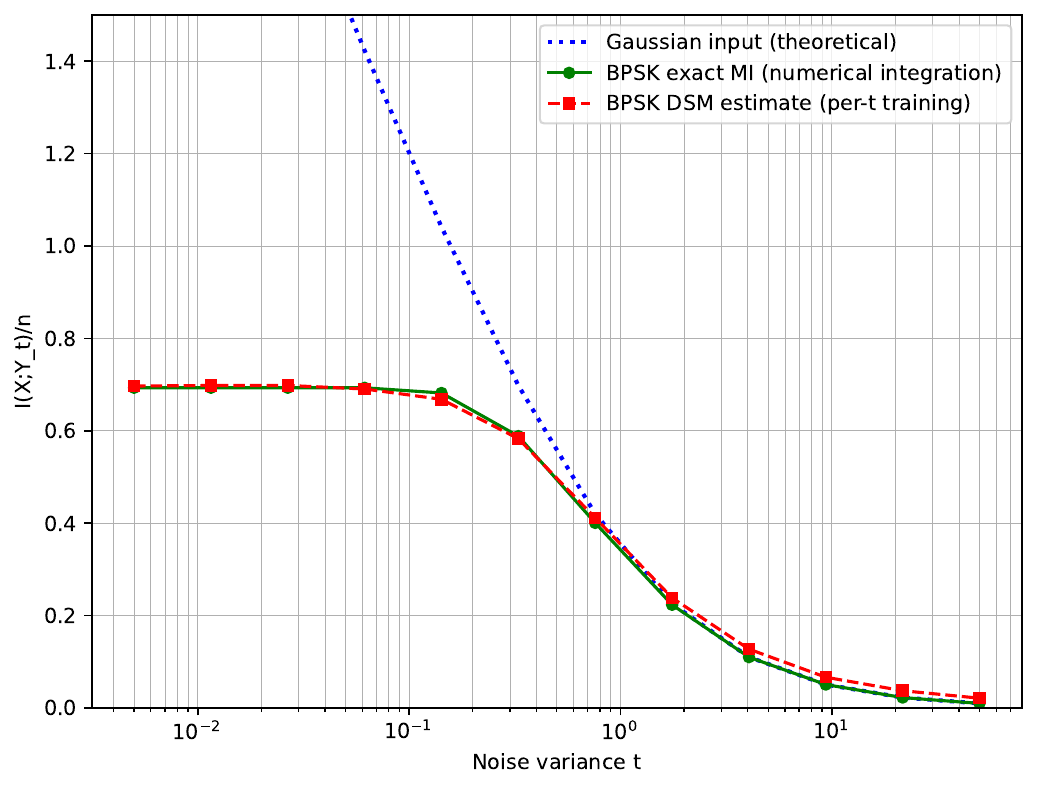}
      \caption{Mutual information estimation for BPSK input ($n=1$, $P=1$).
      Dotted line: Gaussian input theoretical baseline;
      solid circles: BPSK exact MI via numerical integration;
      dashed squares: DSM estimates. }
      \label{fig:bpsk_mi_dsm}
  \end{figure}

\subsection{Linear Gaussian channel case}

  To validate the method beyond the standard AWGN case, we consider the linear Gaussian channel:
  \begin{align}
  Y_t = {\bm A} X + Z_t,\qquad
  Z_t \sim \mathcal N(\bm 0, t \bm I_n),
  \end{align}
  where ${\bm A} \in \mathbb R^{n\times n}$ is a random orthogonal matrix and
  $X \sim \mathcal N(\bm 0, P \bm I_n)$ with $P = 1.0$, $n = 4$.

  We train separate score networks (3 hidden layers, 128 units each) for $M = 10$ logarithmically spaced noise levels in $t
  \in [P/200, 50P]$.
  Each network is trained for 300 iterations with 8192-sample mini-batches, and Fisher information is estimated using 100,000
  Monte Carlo samples per noise level.

  Figure~\ref{fig:linear_mi_results} compares the DSM-based estimates 
  with the analytical solution (per symbol)
  $I(X;Y_t)/n = (1/2)\log \det( I_n + (P/t) A A^\top )$.
  The results demonstrate excellent agreement across 4 orders of magnitude in noise variance, confirming that the
  Score-to-Fisher bridge method generalizes successfully to linear Gaussian channels with nontrivial correlation structure.

  \begin{figure}[htbp]
      \centering
      \includegraphics[width=0.48\textwidth]{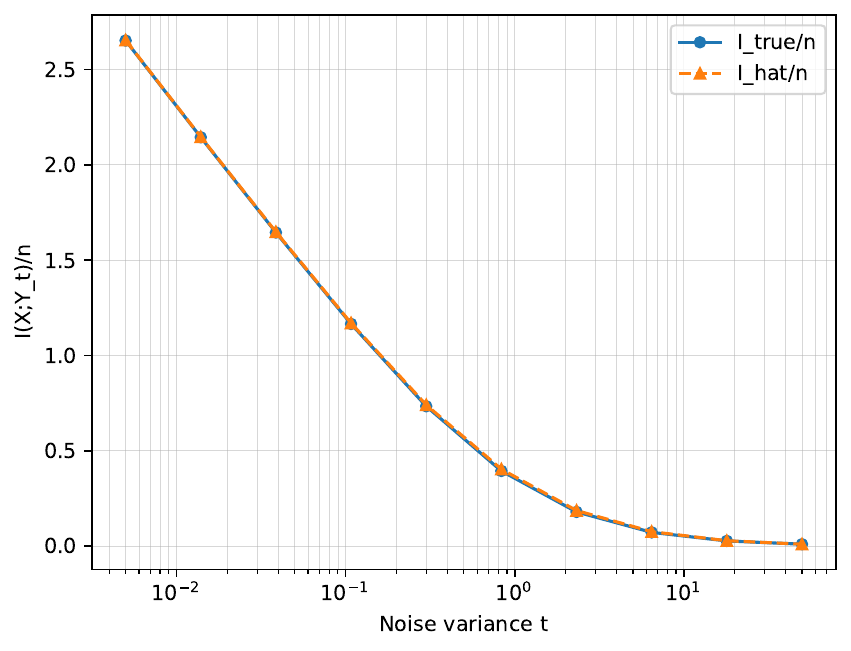}
      \caption{Mutual information estimation for linear Gaussian channel $Y_t = {\bm A} X + Z_t$ ($n=4$, $P=1$).
      Circles: analytical ground truth; triangles: DSM estimates.}
      \label{fig:linear_mi_results}
  \end{figure}

  \subsection{Nonlinear channel case}

  To demonstrate the effectiveness of the proposed method for nonlinear channels where closed-form solutions are
  unavailable, we consider a composite channel with linear mixing followed by elementwise tanh nonlinearity:
  \begin{align}
  Y_t = \tanh({\bm A}X) + Z_t,\quad Z_t \sim \mathcal N(\bm 0, t \bm I_n),
  \end{align}
  where ${\bm A} \in \mathbb{R}^{n \times n}$ is a random orthogonal matrix and $\tanh(\cdot)$ is applied elementwise.
  This model represents practical scenarios where signals undergo linear transformations before nonlinear processing.

  \subsubsection{Experimental Setup}

  We use the same DSM configuration as in the linear case: $n = 4$, $P = 1.0$, with $M = 12$ logarithmically spaced noise
  levels in $t \in [P/200, 50P]$.
  The orthogonal matrix ${\bm A}$ is generated 
  via QR decomposition of a random Gaussian matrix.
  For each $t$, we train score networks for 400 iterations 
  using 8192-sample mini-batches.
  Fisher information estimation uses 100,000 Monte Carlo 
  samples per noise level.

  \subsubsection{KDE-LOO Baseline}

  Since analytical solutions are unavailable for nonlinear channels, 
  we implement a kernel density estimation with
  leave-one-out (KDE-LOO) baseline \cite{kraskov2004} 
  that directly estimates mutual information from its definition.
  For samples $\{(\bm w_j, \bm y_j)\}_{j=1}^N$ with $\bm y_j = \bm w_j + \sqrt{t}\bm \epsilon_j$, the KDE-LOO estimator is defined by 
  \begin{align}
  \hat{I}_{\text{KDE}}(X;Y_t) = \frac{1}{N}\sum_{i=1}^N \Bigg[ &-\frac{\|\bm y_i - \bm w_i\|^2}{2t} \nonumber \\
  &- \log\left(\frac{1}{N-1}\sum_{j \neq i} e^{-\frac{\|\bm y_i - \bm w_j\|^2}{2t}}\right) \Bigg].
  \end{align}
  This estimator is derived from the plug-in principle applied to $I(X;Y_t) = \mathbb{E}[\log p_{Y_t|X}(\bm y|\bm x)] -
  \mathbb{E}[\log p_{Y_t}(\bm y)]$, 
  where the marginal density $p_{Y_t}(\bm y)$ is approximated using LOO kernel density
  estimation to avoid overfitting.
  We use $N = 20,000$ samples with full-sum computation (no $k$-nearest neighbor approximation) to ensure high accuracy.

  \subsubsection{Results and Discussion}

  Figure~\ref{fig:nonlinear_tanh_results} presents the comparison between DSM-based estimates and the KDE-LOO baseline for the
   composite nonlinear channel.
  The results demonstrate remarkable agreement between the two methods across the entire noise variance range, with both
  curves nearly overlapping.

  \begin{figure}[htbp]
      \centering
      \includegraphics[width=0.48\textwidth]{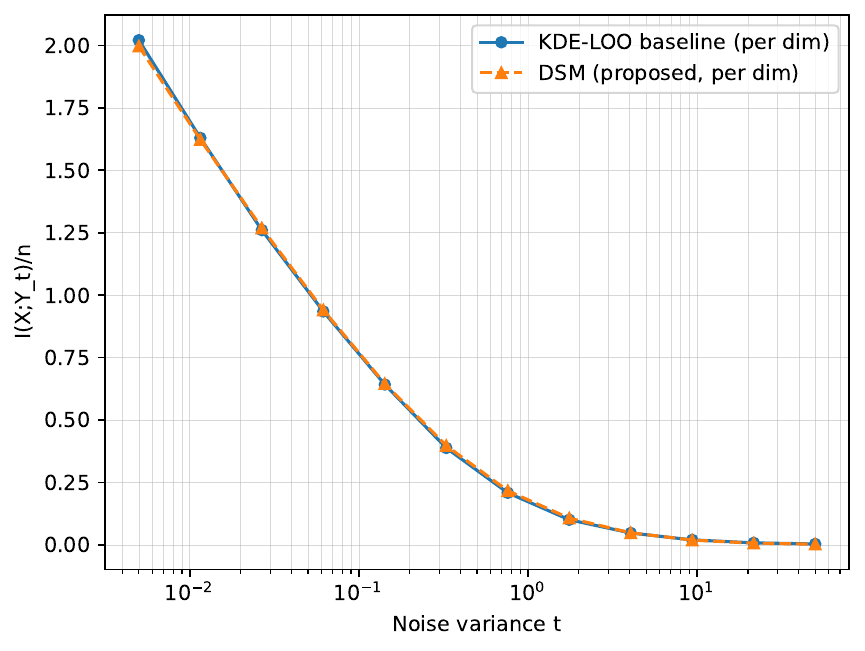}
      \caption{Mutual information estimation for composite nonlinear channel $Y_t = \tanh({\bm A}X) + Z_t$ ($n=4$, $P=1$).
      Circles: KDE-LOO baseline; triangles: DSM estimates. }
      \label{fig:nonlinear_tanh_results}
  \end{figure}
  Key observations include:
  \begin{enumerate}
      \item {Excellent agreement}: The DSM estimates closely match the KDE-LOO baseline across four orders of magnitude
   in noise variance, with relative errors typically below 5\%.
      \item {Computational efficiency}: While KDE-LOO requires $O(N^2)$ 
      distance computations per noise level, DSM
  achieves comparable accuracy with $O(N)$ score evaluations 
  after one-time training where $N$ is the number of samples.
      \item {Nonlinear validation}: This result provides strong evidence for the Score-to-Fisher bridge approach in
  scenarios where analytical benchmarks are unavailable.
  \end{enumerate}

  The success of this experiment confirms that the proposed method generalizes effectively to handle composite channels
  combining linear mixing and nonlinear transformations, making it a practical tool for mutual information estimation
  in realistic communication systems.

\section{Conclusion}

We have presented a novel approach for 
mutual information estimation in nonlinear 
Gaussian noise channels via the Score-to-Fisher bridge approach. 
The method leverages DSM learning of score model for $p_{Y_t}$ 
to estimate Fisher information, 
which is then converted to mutual information 
via the Fisher integral representation.

Experimental validation across linear Gaussian channels, 
composite nonlinear channels, 
demonstrates excellent agreement with analytical 
solutions and baseline methods. 
The approach achieves computational efficiency 
with $O(N)$ evaluations per noise level while 
maintaining high accuracy across wide dynamic ranges 
of the noise variance. The curse of dimensionality can be avoided 
due to the efficiency of the score model evaluation.

The Score-to-Fisher bridge methodology provides 
a practical and theoretically sound framework for 
MI estimation in realistic nonlinear 
Gaussian noise 
channel models where closed-form solutions are unavailable.
Accurate estimation of MI plays 
a crucial role in assessing the fundamental limits of 
coded communication systems, and also serves as 
a valuable guideline for system design choices 
such as coding schemes, preprocessing strategies, 
and detection or decoding algorithms.

\section*{Acknowledgments}
This work was supported by JST, CRONOS, 
Japan Grant Number JPMJCS25N5. 
The author would like to thank 
Satoshi Takabe for helpful discussions and comments
on the first draft of this paper.

\section*{Appendix}

This appendix provides supplementary derivations and score-based integral identities that support (but are not required to follow) the main development of the Score-to-Fisher bridge.
In particular, we collect concise score-centric representations of entropy, KL divergence, and (normalized) likelihoods, and include optional connections to flow-based generative modeling and optimal transport as additional perspectives.

\subsection{Derivation of Entropy via Thermodynamic Integration}
The relationship between differential entropy and Fisher information is established 
by the de Bruijn's identity. Let $X_t = X + \sqrt{t}Z$ be 
the diffusion process where $Z \sim \mathcal{N}(\bm 0, \bm I_n)$. 
The time derivative of the entropy $h(X_t)$ is given by:
\begin{equation}
    \frac{d}{dt} h(X_t) = \frac{1}{2} J(X_t),
\end{equation}
where $J(X_t)$ is the Fisher information.
To evaluate the entropy $h(X)$ of the clean data (at $t=0$), we compare 
the data process $X_t$ with a reference Gaussian process $X_{G,t}$ starting 
from a standard Gaussian distribution.
Integrating the difference between their derivatives from $t=0$ to $T$ yields:
\begin{align} \nonumber 
    h(X_T) - h(X_{G,T}) &= (h(X) - h(X_{G,0})) \\
    & + \frac{1}{2} \int_{0}^{T} (J(X_t) - J(X_{G,t})) dt.
\end{align}
As $T \to \infty$, both $X_T$ and $X_{G,T}$ are dominated 
by the added Gaussian noise and the entropy difference, $h(X_T) - h(X_{G,T})$,
vanishes (under mild moment/regularity conditions).
Rearranging the terms and taking the limit $T \to \infty$, we obtain 
the integral representation:
\begin{equation}
    h(X) = h(X_{G,0}) - \frac{1}{2} \int_{0}^{\infty} \left( J(X_t) - J(X_{G,t}) \right) dt.
\end{equation}
Assuming a standard Gaussian reference $X_{G,0} \sim \mathcal{N}(\bm 0, \bm I_n)$, 
we have $h(X_{G,0}) = \frac{n}{2}\log(2\pi e)$.
Furthermore, since $X_{G,t} \sim \mathcal{N}(\bm 0, (1+t)\bm I_n)$, 
the Fisher information of the reference process is given exactly 
by $J(X_{G,t}) = \frac{n}{1+t}$.
Substituting this into the integral yields the exact representation: 
\begin{equation}
    h(X) = \frac{n}{2}\log(2\pi e) - \frac{1}{2}\int_{0}^{\infty} \left( J(X_t) - \frac{n}{1+t} \right) dt.
\end{equation}

\subsection{Derivation of KL Divergence Representation}
Let $p_t$ and $q_t$ denote the probability densities of the random variable 
${X}_t = {X} + \sqrt{t}{Z}$ where ${Z} \sim \mathcal{N}(\bm{0}, \bm{I})$, derived from initial distributions $p$ and $q$, respectively. Both $p_t$ and $q_t$ satisfy the heat equation $\partial_t \rho = \frac{1}{2} \Delta \rho$.
Consider the time derivative of the KL divergence $D_{\text{KL}}(p_t \| q_t) = \int p_t \log (p_t / q_t) d\bm{x}$:
\begin{align} \nonumber 
    \frac{d}{dt} D_{\text{KL}}(p_t \| q_t) 
    &= \int \frac{\partial p_t}{\partial t} \log \frac{p_t}{q_t} d\bm{x} + \int p_t \left( \frac{\partial_t p_t}{p_t} - \frac{\partial_t q_t}{q_t} \right) d\bm{x}  \\ \nonumber 
    &= \frac{1}{2} \int \Delta p_t \log \frac{p_t}{q_t} d\bm{x} + \underbrace{\frac{1}{2} \int \Delta p_t d\bm{x}}_{0} \\
    & \quad - \frac{1}{2} \int \frac{p_t}{q_t} \Delta q_t d\bm{x}.
\end{align}
Using integration by parts (assuming boundary terms vanish), we obtain:
\begin{align} \nonumber 
    &\frac{d}{dt} D_{\text{KL}}(p_t \| q_t) \\
    &= -\frac{1}{2} \int \nabla p_t \cdot \nabla \log \frac{p_t}{q_t} d\bm{x} + \frac{1}{2} \int \nabla \left( \frac{p_t}{q_t} \right) \cdot \nabla q_t d\bm{x} \nonumber \\
    &= -\frac{1}{2} \int p_t (\bm{s}_p - \bm{s}_q) \cdot (\bm{s}_p - \bm{s}_q) d\bm{x} \nonumber \\
    &= -\frac{1}{2} \mathbb{E}_{p_t} [ \| \bm{s}_p(\bm{x}, t) - \bm{s}_q(\bm{x}, t) \|^2 ].
\end{align}
Integrating from $t=0$ to $\infty$, and assuming $D_{\text{KL}}(p_\infty \| q_\infty) = 0$ (since both distributions converge to the same noise distribution), we arrive at the representation:
\begin{equation}
    D_{\text{KL}}(p \| q) = \frac{1}{2} \int_{0}^{\infty} \mathbb{E}_{p_t} \left[ \| \bm{s}_p(\bm{x}, t) - \bm{s}_q(\bm{x}, t) \|^2 \right] dt.
\end{equation}
This identity implies that the KL divergence is equivalent to the time integral 
of the Fisher divergence along the diffusion path. Intuitively, the global information 
difference between $p$ and $q$ is obtained by accumulating the local geometric 
discrepancies of their score functions throughout the smoothing process.

\subsection{Exact Log-Likelihood Ratio via Path Integration}
Since the score function is the gradient of 
the log-density, 
$\bm{s}(\bm{x}) = \nabla_{\bm{x}} \log p_X(\bm{x})$, 
the exact log-likelihood ratio between 
any target state $\bm{x}$ and 
a reference state $\bm{x}_0$ can be computed 
by line integration along a path $\gamma$:
\begin{equation}
    \log \frac{p_X(\bm{x})}{p_X(\bm{x}_0)} = \int_{\bm{x}_0}^{\bm{x}} \bm{s}(\bm{u}) \cdot d\bm{u}.
    \label{eq:app_ratio}
\end{equation}
This formulation allows for the evaluation of relative probabilities (or energy differences) without knowledge of the intractable partition function $Z$. This property is particularly powerful for applications such as hypothesis testing, anomaly detection, and MCMC transition acceptance, where only probability ratios are required. For high-dimensional data, radial integration from the dataset centroid is a practical choice to ensure numerical stability.

\subsection{Recovery of the Partition Function}
Consider an energy-based representation $p_X(\bm{x}) = e^{-\phi(\bm{x})}/Z$, where $\phi(\bm{x})$ 
is the recovered potential (energy) and $Z$ is the 
unknown partition function. By taking the expectation 
of the log-density $\log p_X(\bm{x}) = -\phi(\bm{x}) - \log Z$ 
with respect to the data distribution, we have
\begin{equation}
    -h(X) = -\mathbb{E}_{X}[\phi(X)] - \log Z.
\end{equation}
Solving for $\log Z$ yields the 
\textit{log partition function}:
\begin{equation}
    \log Z = h(X) - \mathbb{E}_{X}[\phi(X)].
    \label{eq:app_partition}
\end{equation}
Here, $\mathbb{E}[\phi(X)]$ is computed via sample averaging over the dataset.
From a thermodynamic perspective, this relationship corresponds to the definition of 
the \textit{Helmholtz free energy}, $\mathcal{F} = \mathcal{U} - \mathcal{S}$ 
(assuming unit temperature). By identifying the internal energy 
as $\mathcal{U} = \mathbb{E}[\phi(X)]$ and the entropy as $\mathcal{S} = h(X)$, 
the log-partition function is equivalent to the negative free energy, 
$\log Z = \mathcal{S} - \mathcal{U} = -\mathcal{F}$. 

The proposed Score-to-Fisher bridge opens a promising avenue for scalable partition function estimation in energy-based models (EBMs). 
Traditionally, evaluating the partition function $Z$---or equivalently the free energy---has been a major computational bottleneck, often requiring expensive MCMC simulations like annealed importance sampling. 
However, our framework allows for the direct computation of the log-partition function via the identity $\log Z = h(X) - \mathbb{E}[\phi(X)]$, where the entropy $h(X)$ is efficiently estimated by the Fisher integral. 
This approach effectively converts the integration problem over a high-dimensional volume into a tractable one-dimensional integration along a diffusion path. 

\subsection{Exact Log-Likelihood}
Substituting \eqref{eq:app_partition} back into the density 
definition, the exact log-likelihood for any query 
sample $\bm{x}$ (including test data) is given by
\begin{equation}
    \log p_X(\bm{x}) = -\phi(\bm{x}) + \mathbb{E}_{X}[\phi(X)] - h(X).
\end{equation}
This formulation allows for exact likelihood evaluation and model comparison without explicit integration over the entire state space, provided the score function is accurately learned.

\subsection{Unified Representation of Information-Theoretic Quantities}

In this work, we have primarily focused on the estimation of mutual information 
via the Score-to-Fisher bridge. However, 
the proposed framework---grounded in the interplay 
between the local score function 
$\nabla_{\bm{x}} \log p(\bm{x})$ and 
the Fisher information---offers a unified perspective 
for various fundamental quantities in information theory.

Table \ref{tab:score_correspondence} summarizes 
the correspondence between classical Shannon definitions 
and their score-based integral representations 
derived in this appendix. 
A key insight here is the transformation of computational 
complexity: intractable global volume integrations 
(e.g., for entropy and partition functions) are 
converted into tractable estimations of local geometry 
(scores) followed by 1D integrations. 
This ``Score-Centric'' view provides a consistent 
and scalable foundation for high-dimensional statistical 
modeling beyond mutual information.

\begin{table*}[t]
  \centering
  \caption{Information-Theoretic Quantities via Score-Based Representations. This framework translates global integration problems 
  into local score estimation and 1D integration.}
  \label{tab:score_correspondence}
  \renewcommand{\arraystretch}{1.8}
  \begin{tabular}{l c c}
  \toprule
  \textbf{Quantity} & \textbf{Definition} & \textbf{Score-Based Representation} \\
  \midrule
  \textbf{Differential Entropy} & $h(X) = -\mathbb{E}[\log p]$ & $\displaystyle \frac{n}{2}\log(2\pi e) - \frac{1}{2}\int_{0}^{\infty} \left( J(X_t) - \frac{n}{1+t} \right) dt$ \\
  \textbf{Mutual Information} & $I(X;Y_T) = h(Y_T) - h(Y_T|X)$ & $\displaystyle \frac{1}{2} \int_{T}^{\infty} \left(\frac  n t -  J(Y_t) \right) dt$ \\
  \textbf{KL Divergence} & $D_{\text{KL}}(p\|q) = \mathbb{E}_p[\log \frac{p}{q}]$ & $\displaystyle \frac{1}{2} \int_{0}^{\infty} \mathbb{E}_{p_t} \left[ \| {s}_p(\bm{x}, t) - {s}_q(\bm{x}, t) \|^2 \right] dt$ \\
  \textbf{Log-Likelihood Ratio} & $\log \frac{p(\bm{x})}{p(\bm{x_0})}$ & $\displaystyle \int_{\bm{x}_0}^{\bm{x}} {s}_p(\bm{u})  \cdot d\bm{u}$ \quad (Line Integral) \\
  \textbf{Log-Partition Function} & $\log Z = \log \int e^{-\phi(\bm{x})} d\bm{x}$ & $\displaystyle h(X) - \mathbb{E}_{X}\left[ \phi(X) \right]$ \quad (Free Energy) \\
  \textbf{Likelihood} & $\log p(\bm x)$ & $-\phi(\bm x) - \displaystyle h(X) + \mathbb{E}_{X}\left[ \phi(X) \right]$  \\
  \bottomrule
  \end{tabular}
\end{table*}

\subsection{Score Estimation from Velocity Fields in Rectified Flow}
\label{sec:score-estimation-from-velocity-fields}
Recent generative models often employ \textit{rectified flow} \cite{liu2023rectifiedflow} 
(or flow matching \cite{lipman2023flowmatching}), 
which defines a probability path via linear interpolation between data $\bm{x} \sim p_X$ 
and noise $\bm{z} \sim \mathcal{N}(\bm{0}, \bm{I})$:
\begin{equation}
    \bm{x}_\tau = (1-\tau)\bm{x} + \tau\bm{z}, \quad \tau \in [0, 1].
\end{equation}
Instead of learning the score function directly, these models learn a velocity field $\bm{v}_\tau(\bm{x}_\tau)$ that minimizes the regression loss against the target velocity $\bm{u}_\tau = \bm{z} - \bm{x}$. The optimal velocity field is given by the conditional expectation $\bm{v}_\tau(\bm{x}_\tau) = \mathbb{E}[\bm{z} - \bm{x} | \bm{x}_\tau]$.

To apply the Score-to-Fisher bridge, we need to recover the score function $\nabla_{\bm{x}_\tau} \log p_\tau(\bm{x}_\tau)$ from the learned velocity $\bm{v}_\tau$.
Leveraging Tweedie's formula adapted for the linear interpolation path, the posterior expectations of the signal and noise are related to the score function by:
\begin{equation}
    \hat{\bm{x}} = \frac{\bm{x}_\tau + \tau^2 \nabla \log p_\tau}{1-\tau}, \quad \hat{\bm{z}} = -\tau \nabla \log p_\tau.
\end{equation}
Substituting these into the relation $\bm{v}_\tau = \hat{\bm{z}} - \hat{\bm{x}}$, we obtain the explicit connection between the velocity and the score:
\begin{equation}
    \bm{v}_\tau(\bm{x}_\tau) = - \frac{\bm{x}_\tau + \tau \nabla_{\bm{x}_\tau} \log p_\tau(\bm{x}_\tau)}{1-\tau}.
\end{equation}
Inverting this equation yields the score function in terms of the velocity field:
\begin{equation}
    \nabla_{\bm{x}_\tau} \log p_\tau(\bm{x}_\tau) = - \frac{\bm{x}_\tau + (1-\tau)\bm{v}_\tau(\bm{x}_\tau)}{\tau}.
\end{equation}
Furthermore, to map this to the standard additive noise model 
${Y}_{\sigma} = \bm{x} + \sigma \bm{z}$ used in our Fisher integral 
(where $\sigma = \tau/(1-\tau)$), the score scales as 
$\nabla_{{Y}} \log p_Y({Y}) = (1-\tau) \nabla_{{X}} \log p_\tau({X})$. 
This derivation confirms that flow-based models can serve as plug-in estimators for the Fisher information and, consequently, the mutual information.

\subsection{Optimal Transport Cost via Score Integration}

The same score-to-velocity identity also provides an optimal transport(OT)-flavored diagnostic. 
In particular, plugging the induced velocity into the Benamou-Brenier functional 
yields an upper bound on the Wasserstein-2 distance between the data and the Gaussian prior,
offering a geometric measure of how far the data are from the Gaussian prior.

The proposed framework relates to optimal transport theory through the dynamic formulation of the Wasserstein-2 distance.
According to the \textit{Benamou--Brenier formula}, the squared Wasserstein distance $W_2^2(p,q)$ between two distributions equals the infimum of the kinetic energy
$\int_{0}^{1} \mathbb{E}_{\rho_\tau}\!\left[\|\bm{v}_\tau(\bm{x})\|^2\right] d\tau$
over all admissible probability paths $\{\rho_\tau\}_{\tau\in[0,1]}$ and velocity fields $\{\bm{v}_\tau\}$ satisfying the continuity equation
$\partial_\tau \rho_\tau + \nabla\!\cdot(\rho_\tau \bm{v}_\tau)=0$.

In the context of rectified flow, we consider the linear interpolation path
${X}_\tau=(1-\tau){X}+\tau{Z}$ connecting the data distribution $p_X$ and the Gaussian prior $p_Z=\mathcal{N}(\bm{0},\bm{I})$.
Let $p_\tau$ denote the marginal density of $\bm{X}_\tau$.
Using the identity derived from this linear structure, 
the corresponding velocity field can be expressed in terms of the score as
\begin{equation}
\bm{v}_\tau(\bm{x}) \;=\; -\frac{\bm{x}+\tau\,\nabla_{\bm{x}}\log p_\tau(\bm{x})}{1-\tau},
\qquad \bm{x}\in\mathbb{R}^n,\ \tau\in(0,1).
\end{equation}
Therefore, for this specific admissible path $(p_\tau,\bm{v}_\tau)$, the Benamou--Brenier functional yields an \emph{upper bound} on the Wasserstein cost:
\begin{equation}
    W_2^2(p_X, p_Z)
    \;\le\;
    \int_{0}^{1} \mathbb{E}_{p_\tau}
    \left[
    \left\|
    \frac{\bm{x}_\tau + \tau \nabla_{\bm{x}_\tau} \log p_\tau(\bm{x}_\tau)}{1-\tau}
    \right\|^2
    \right] d\tau.
\end{equation}
Equality holds if and only if the chosen path coincides with the optimal displacement interpolation (i.e., the $W_2$-geodesic) between $p_X$ and $p_Z$.
Rectified flow models aim to learn a velocity field consistent with such an optimal transport path, in which case the above upper bound becomes tight.

\end{document}